\documentclass[runningheads]{llncs}
\usepackage[a4paper, margin=1.2in]{geometry}

\usepackage[T1]{fontenc}
\usepackage{lmodern}
\usepackage{amsmath,amssymb,mathtools,centernot,nicefrac}
\usepackage{graphicx}
\usepackage{booktabs,multirow}
\usepackage{array}
\usepackage{enumitem}
\usepackage{xspace,xcolor}
\usepackage[
    colorlinks=true,
    linkcolor=blue,
    citecolor=blue,
    urlcolor=blue,
    pdfborder={0 0 0}
]{hyperref}
\usepackage[capitalise,noabbrev]{cleveref}

\newcommand{\pp}{\mathsf{pp}}

\newcommand{\PPT}{\ensuremath{\mathsf{PPT}}\xspace}

\newcommand{\N}{\ensuremath{\mathbb{N}}}
\newcommand{\R}{\ensuremath{\mathbb{R}}}
\newcommand{\Z}{\ensuremath{\mathbb{Z}}}

\newcommand{\G}{\ensuremath{\mathbb{G}}}

\newcommand{\GG}{\ensuremath{\mathcal{G}}}
\newcommand{\edges}{\ensuremath{\mathcal{E}}}
\newcommand{\vertices}{\ensuremath{\mathcal{V}}}

\newcommand{\MMS}{\ensuremath{\Sigma}}
\newcommand{\ZKP}{\ensuremath{\Pi}}

\newcommand{\algorithm}[1]{\mathsf{#1}}

\newcommand{\KGen}{\algorithm{KGen}}
\newcommand{\Issue}{\algorithm{Issue}}
\newcommand{\Sign}{\algorithm{Sign}}
\newcommand{\Present}{\algorithm{Present}}
\newcommand{\Verify}{\algorithm{Verify}}
\newcommand{\ParGen}{\algorithm{ParGen}}

\newcommand{\VerifyCert}{\algorithm{CertVf}}
\newcommand{\ValidPath}{\algorithm{ValidPath}}
\newcommand{\Sim}{\algorithm{Sim}}
\newcommand{\Ext}{\algorithm{Ext}}

\newcommand{\e}{\mathsf{e}} %Pairing

\newcommand{\variable}[1]{\mathsf{#1}}
\newcommand{\id}{\variable{id}}
\newcommand{\sk}{\variable{sk}}
\newcommand{\pk}{\variable{pk}}
\newcommand{\isk}{\variable{isk}}
\newcommand{\ipk}{\variable{ipk}}

\newcommand{\ctx}{\variable{ctx}}
\newcommand{\state}{\variable{st}}
\newcommand{\ctr}{\variable{cid}}
\newcommand{\cert}{\variable{cert}}

\newcommand{\policy}{\phi}
\newcommand{\sig}{\sigma}

\newcommand{\rnd}{t}
\newcommand{\attr}{a}
\newcommand{\source}{S}
\newcommand{\dest}{D}
\newcommand{\node}{\texttt{n}}
\newcommand{\edge}{\texttt{e}}
\newcommand{\selfloop}{\texttt{loop}}

\newcommand{\secpar}{\ensuremath{\lambda}}
\newcommand{\rexec}{\ensuremath{\xleftarrow{\scriptscriptstyle \$}}}
\newcommand{\exec}{\ensuremath{\leftarrow}}
\newcommand{\negl}{\ensuremath{\nu}}
\newcommand{\bin}{\ensuremath{\{0,1\}}}

\newcommand{\NIZK}{\ensuremath{\mathsf{NIZK}}}

\newcommand{\adv}{\mathcal{A}}

\newcommand{\oracle}{\mathcal{O}}
\newcommand{\Ocert}{\oracle_\text{cert}}
\newcommand{\Ocorr}{\oracle_\text{corr}}
\newcommand{\Opres}{\oracle_\text{pres}}

\newcommand{\setQ}{\mathcal{Q}}
\newcommand{\Qcert}{\setQ_\text{cert}}
\newcommand{\Qpres}{\setQ_\text{pres}}
\newcommand{\Qcorr}{\setQ_\text{corr}}

\begin{document}

\pagestyle{plain}

\title{Certifying Hidden Paths: \\Scalable Topology Assurance for QKD Networks}

\author{
Alessandro Colombo\inst{1}\orcidID{0009-0009-4574-7101}
\and
Margherita Cozzolino\inst{2}\thanks{This work was partially carried out while the author was with AIT Austrian Institute of Technology.}
\and \\
Stephan Krenn\inst{1}\orcidID{0000-0003-2835-9093}
\and
Thomas Lorünser\inst{1,3}\orcidID{0000-0002-1829-4882}
}

\authorrunning{A. Colombo et al.}

\institute{
AIT Austrian Institute of Technology, Vienna, Austria\\
\email{\{alessandro.colombo,stephan.krenn,thomas.loruenser\}@ait.ac.at}
\and
Independent Researcher, Trento, Italy
\and
Digital Factory Vorarlberg GmbH, Dornbirn, Austria
}

\maketitle

\begin{abstract}
Large-scale Quantum Key Distribution (QKD) networks rely on trusted repeaters, making the security properties of the selected communication path an essential part of end-to-end assurance.
At the same time, network operators may be unwilling to disclose their internal topology.
We present a topology-certification mechanism that lets a provider prove in zero knowledge that policy-compliant routes between two endpoints exist, without disclosing the routes in an individual presentation.
Our main idea is to certify nodes and edges independently using multi-message signatures, rather than signing the complete graph as one object.
For a route chosen in advance, proof size and cryptographic proving and verification work depend on its positions and attributes, independently of the overall network size, whereas route discovery and certification have separate graph-dependent costs.
The construction hides the actual route length up to a public bound and sketches extensions to node-disjoint routes and monotonic additions within a graph epoch.
We give a formal security model and conditional proofs of unforgeability and graph hiding for a generic construction.
For a BBS-based construction, presentation-size estimates remain below $300$\,KiB at $\ell=m=n=50$, and measured generation and verification times remain below $400$\,ms at $\ell=64$ and $m=n=8$.

\keywords{Quantum key distribution \and topology privacy \and path certification \and zero-knowledge proofs \and trusted repeater networks}
\end{abstract}

\section{Introduction}\label{s:introduction}

Quantum Key Distribution (QKD) establishes shared keys over optical links with information-theoretic security under its protocol assumptions. It complements post-quantum cryptography in addressing long-term threats, but optical loss limits the reach of individual links. Although quantum repeaters may eventually lift this restriction, current large-scale networks rely on trusted repeaters to connect distant users. Neighbors generate link keys, and intermediate devices relay key material between them. A repeater on the selected route can reconstruct the end-to-end key, so link-level QKD alone does not determine connection security.

The physical protection, hardware integrity, and operational procedures of every traversed repeater therefore matter. Regular audits can certify devices and deployment conditions, but different sites and operators may offer different levels of assurance. A customer's relevant question is whether the components of an available route \emph{together} meet its requirements, rather than whether isolated devices carry certificates.

A customer may require approved vendors, minimum certification levels at each node, sufficient certified capacity on genuine links, and a bounded number of hops.
For governmental deployments, such requirements may be checked by an authority that is allowed to inspect the complete network architecture.
For private QKD customers, however, this is generally not an option:
they cannot be expected to obtain access to the provider's internal topology, which may contain commercially sensitive and security-critical information about sites, interconnections, and central bottlenecks~\cite{DBLP:conf/uss/MeierTLVV18}.
An independent auditor may still certify the underlying infrastructure, but the provider needs a way to later answer customer requests without disclosing the audited graph.
The resulting assurance goal can therefore be phrased as a single statement:
\emph{Some certified path between the chosen endpoints satisfies the requested properties specified in a policy.}

A first solution to this problem was recently proposed by \cite{qcnc2026}.
Their construction is based on graph signatures proposed by \cite{DBLP:journals/iacr/TanSG20} to enable topology-hiding path certification.
In their setting, an independent auditor certifies the complete network graph, including its topology as well as all node and edge attributes.
Based on this certification, a network provider can subsequently generate a zero-knowledge proof demonstrating that the certified network contains a path satisfying a customer's requested policy, while revealing neither the network topology nor the selected path itself.

While this work demonstrates the feasibility of topology-hiding certification for QKD networks, several practical limitations remain.
\begin{enumerate}[label=(\roman*)]
  \item 
    First, the proof size as well as prover and verifier costs grow linearly with the size of the certified network graph, making the approach increasingly costly for larger deployments.
  \item 
    Second, the paper mainly focuses on path availability, rather than complex policies on the path, such that the precise set of supported policies remains unclear, a challenge that was subsequently addressed in \cite{thesis-margherita}, yet without giving formal guarantees.
  \item 
    Third, path constraints and attribute-based constraints cannot be naturally combined into a unified policy framework, especially when aiming for disjoint multi-path communications.
  \item 
    Finally, while the construction in \cite{qcnc2026} focuses on inter-network communication, it remains unclear how efficient path-length hiding can be achieved for intra-network communication, due to the employed padding mechanism.
    However, concealing the path length is particularly important in this setting, as the number of traversed repeaters may itself reveal sensitive information about the underlying topology, including the existence of direct links and other structural characteristics.
    This issue was partially addressed by \cite{thesis-margherita}, yet leaving it open how to combine attributes with padding approaches.
\end{enumerate}

These limitations motivate the need for more expressive and scalable topology certification mechanisms that preserve the confidentiality of the provider's infrastructure while enabling customers to obtain strong, independently verifiable guarantees about the existence of policy-compliant communication paths.

\subsection{Our Contributions}

We present a topology-certification mechanism for QKD networks that allows a provider to prove in zero knowledge that one or several suitable paths between two endpoints exist, while hiding the network topology and the selected paths.
The construction is tailored to the size and requirements of large-scale trusted-repeater QKD networks and supports policies over both nodes and edges.

Our main technical idea is to certify nodes and edges independently instead of authenticating the complete network as a single graph object.
Compared with \cite{qcnc2026} and \cite{thesis-margherita}, this makes presentation size and computation independent of the overall network size and dependent only on the presented path and its attributes.
It also allows certified networks to grow efficiently, since new nodes and edges can be added without re-certifying the complete graph.
Importantly, instead of intuitive security goals in previous constructions, we provide a formal security framework and detailed proofs to ensure security of our construction.

The construction further supports hiding the actual path length through dedicated padding loops.
In contrast to~\cite{qcnc2026}, where an upper bound on the padding length has to be fixed already at certification time and the presentation cost grows with this bound, our construction does not require such a predefined maximum.
Supporting padding adds only a fixed quantifiable certification overhead, while the proof cost depends solely on the padding length chosen for the individual presentation.
This length can therefore be selected freely for each proof without re-certifying the graph.
Addressing a further open challenge from previous work, we show how this padding mechanism can also be combined with our extension for proving several hidden node-disjoint paths.

Finally, we provide a concrete instantiation based on BBS multi-message signatures and Schnorr-based zero-knowledge proofs.
BBS is particularly relevant for deployment, as it is currently considered in various standardization efforts.
Our evaluation shows that presentation sizes remain below $300$\,KiB for parameter ranges exceeding realistic QKD deployments, while presentation generation and verification remain below approximately $400$\,ms even for paths of lengths exceeding those to be expected in typical QKD applications.

\subsection{Technical Overview}\label{s:technical_overview}

On an informal level our scheme works as follows. 
Before a customer requests a connection, an issuer audits a particular network of a provider which is modeled as a directed network graph.
It assigns a graph identifier and a fresh random binding tag, then signs each node's identifier and attributes and each edge's endpoint identifiers and attributes separately.
Every signature includes the same graph identifier and tag.
The common tag ensures that a proof cannot combine components from different certification runs.
The issuer requires distinct node identifiers and adds a signed self-loop with distinguished attributes at each node.
These loops will later serve for padding in order to hide the path length.

To generate a request, the customer publishes source and destination identifiers, predicates for acceptable nodes and genuine links, a bound $\ell$ on the number of node positions, and potentially a request context $\ctx$. The provider privately finds a compliant route passing through at most $\ell$ nodes. If it is shorter, the provider repeats the destination node using its certified self-loop until the sequence has exactly $\ell$ node positions. Thus the proof always has the public shape of $\ell$ nodes and $\ell-1$ edges, regardless of the route's actual length.

The provider then produces one non-interactive zero-knowledge proof.
It shows knowledge of valid issuer signatures for all components of a valid path, bound to the same graph and binding tag.
For every edge position, the signed endpoint identifiers must match the identifiers of the adjacent signed nodes, and every node must satisfy the required node policy.
Each edge must either satisfy the link policy or have equal endpoint identifiers, in which case it represents a padding loop.
The proof hides which of these two conditions is satisfied at each edge position.

Finally, the customer verifies this proof against the issuer's public keys and its request. Acceptance establishes the existence of a matching route in the certified graph while hiding its internal components and actual length.
Once a route is selected, the proof core scales with the padded route and its attributes, whereas certification storage and route discovery depend on the full graph.

\subsection{Related Work}\label{s:related_work}

Most closely related to our work are recent approaches for topology-hiding assurance in QKD networks.
\cite{qcnc2026} and \cite{thesis-margherita} introduce topology-hiding connectivity assurance for QKD inter-networking, enabling network providers to prove connectivity between endpoints without disclosing their internal network topologies.
Their construction considers multi-graphs and hidden endpoints and additionally supports statements about multiple disjoint paths.
However, in their constructions, the computational and communication complexity scales with the size of the overall QKD network, and many extensions, such as support of multi-path transmissions, are only sketched.
\cite{acns2026} subsequently study topology-hiding path validation within large-scale QKD networks and provide a formal framework in which a network operator proves that the path used for a transmission satisfied certain pre-established conditions.
Their framework provides a-posteriori assurance, whereas ours provides a-priori assurance.

The construction of \cite{qcnc2026} builds on a broader line of work on graph signatures and zero-knowledge proofs over certified graphs.
\cite{DBLP:conf/ccs/Gross14} introduces efficient certification and zero-knowledge proofs for infrastructure topology graphs, allowing properties such as connectivity and isolation to be proven without revealing the certified graph.
\cite{DBLP:conf/fc/Gross15} develops this approach further by considering signatures and efficient proofs over committed graphs and more general NP statements.
\cite{DBLP:journals/corr/abs-2005-12447} provide an implementation and evaluation of the strong-RSA-based graph-signature construction.
An alternative graph-signature scheme based on the $q$-SDH assumption was proposed in \cite{DBLP:journals/iacr/TanSG20} and later extended towards relational credentials by \cite{DBLP:journals/iacr/TanSG23}.
\cite{DBLP:conf/ccs/NakanishiYMP20} improves proofs for connectivity and isolation using bilinear-map accumulators, in particular obtaining proof sizes and verification costs that are independent of the total number of vertices and edges.
This work was extended by \cite{DBLP:journals/ieiceta/YoshiokaNK25} to labeled directed graphs.
These works demonstrate that efficient privacy-preserving proofs over certified graphs are possible under a variety of cryptographic assumptions and representations.
Our approach differs in that it avoids treating the complete network as a single certified graph object and instead builds topology assurance from independently authenticated graph components.

The confidentiality of network topology has also been considered independently of QKD.
\cite{DBLP:journals/tcs/HinkelmannJ07} studies communication in networks whose topology itself constitutes sensitive information and shows that communication may inherently reveal structural information about the underlying network.
This observation is particularly relevant for infrastructure networks, where connectivity, path lengths, and central nodes may expose operational or security-sensitive information.
More recently, \cite{DBLP:journals/tdsc/LiSLSMS25} proposed StealthPath for privacy-preserving path validation in path-aware networks, hiding both the forwarding path and its length while allowing validation in the data plane.
This setting differs from ours in that the objective is to validate that packets traverse an intended path, whereas we prove the existence and properties of a path in a certified but hidden network.

The security of trusted-repeater QKD networks has been studied from several complementary perspectives.
\cite{DBLP:journals/jcs/SalvailPDALL10} analyzes the security of trusted-repeater QKD networks in the presence of compromised nodes and considers mechanisms for maintaining confidentiality and authenticity despite partially untrusted infrastructure.
A common approach to reducing the trust placed in individual repeaters is to distribute key material across multiple paths.
\cite{DBLP:conf/IEEEares/ValbusaLSL25} surveys such multi-path approaches and discusses how cryptographic and coding techniques can mitigate the single-point-of-failure problem inherent in trusted-repeater networks.
Our extension to node-disjoint path proofs is complementary to these techniques, since it allows a verifier to establish the required disjointness without learning the identities of the intermediate nodes.
Finally, \cite{qcnc2025} studies secure key forwarding in large-scale QKD networks and provides confidentiality and authenticity guarantees for transporting key material along a selected path.
While that work addresses the secure forwarding of keys once a path has been determined, the present work addresses the preceding assurance problem of proving that a suitable path with the required properties exists.

\subsection{Outline}

The remainder of the paper is structured as follows.
Section~\ref{s:preliminaries} introduces the preliminaries, and Section~\ref{s:model} defines the syntax and security model.
Section~\ref{s:abcs} presents the construction, security analysis, and extensions.
Section~\ref{s:evaluation} evaluates presentation size and runtime, and Section~\ref{s:conclusion} presents a short summary and outlook.
The supplementary material contains full proofs and additional technical details.

\section{Preliminaries}\label{s:preliminaries}

\subsection{Notation}\label{s:notation}
The main security parameter is $\secpar\in\N$. Unless stated otherwise, algorithms receive $1^\secpar$ implicitly and run in probabilistic polynomial time (\PPT) in $\secpar$ and the length of their explicit inputs. We write $a\rexec A(x)$ for the output of a randomized algorithm and $a\exec A(x)$ for deterministic execution. When its coins matter, we write $a\rexec A(x;r)$. The notation $a\rexec S$ means uniform sampling from a finite set $S$. We use $[k]=\{1,\ldots,k\}$ and $\Z_p$ for the integers modulo $p$. A function $\negl:\N\to\R_{\geq0}$ is negligible if, for every constant $c>0$, it is eventually smaller than $\secpar^{-c}$. Tuples, public statements, and challenges have fixed, unambiguous, efficiently decodable encodings.

The generic construction does not require a particular algebraic group. Its concrete BBS proof core uses groups $\G_1,\G_2,\G_T$ of prime order $p$ and a non-degenerate bilinear pairing $\e:\G_1\times\G_2\to\G_T$. We reserve $\GG=(\vertices,\edges)$ for a network graph, distinguishing it from the groups $\G_i$. The certified components of a selected route and their signatures belong to the prover's witness, whereas endpoints, policy predicates, a padded length bound, and a request context are public inputs to a presentation.

\subsection{Cryptographic Building Blocks}\label{s:building_blocks}

Our construction uses two cryptographic building blocks. Multi-message signatures authenticate the certified nodes and edges of a network, while non-interactive proofs let a provider show that these components form a valid route without disclosing it. We introduce the syntax of both primitives and the security notions needed for our analysis: EUF-CMA security for signatures, and zero knowledge and fresh-statement weak simulation extractability for proofs. Their full definitions, correctness conditions, and security games appear in the supplementary material (\cref{s:supplement_proof_games}).

\begin{Definition}[Multi-message signatures]\label{def:multi-message-signatures}
A scheme signs vectors $\vec m=(m_1,\ldots,m_L)\in\mathcal M^L$ of supported length $L$.
Parameter generation $\ParGen(1^\secpar,L)$ returns $\pp$, and key generation $\KGen(\pp)$ returns $(\sk,\pk)$.
Signing $\Sign(\sk,\vec m)$ returns $\sig$, and verification $\Verify(\pk,\vec m,\sig)$ returns a bit.
Here $\pp$ is implicit in $\Sign$ and $\Verify$. Correctness is perfect: every honestly generated signature on a supported vector verifies.
\end{Definition}

\begin{Definition}[EUF-CMA security]\label{def:euf-cma}
For each supported $L$, generate parameters and keys honestly and give a \PPT adversary $\adv$ $(\pp,\pk)$ and adaptive access to $\Sign(\sk,\cdot)$. Let $Q$ be its queried vectors and $(\vec m^*,\sig^*)$ its output. Following~\cite{DBLP:journals/siamcomp/GoldwasserMR88}, the scheme is EUF-CMA secure if, for every $\adv$, there exists a negligible function $\negl$ such that
\[
  \Pr\bigl[\Verify(\pk,\vec m^*,\sig^*)=1
  \ \land\ \vec m^*\notin Q\bigr]\leq\negl(\secpar)\,.
\]
A new signature on an already queried vector is not a forgery.
\end{Definition}

\begin{samepage}
For our application, the issuer signs nodes and directed edges under separate keys. Their signed vectors contain a graph identifier and a fresh graph-wide binding tag together with identifiers and attributes. EUF-CMA rules out a valid signature on a component vector that the issuer did not certify, even after the adversary obtains certificates for graphs of its choice. The tag and endpoint identifiers are signed messages, rather than assumptions supplied only by the prover.

\medskip

For a relation $\mathcal R$, let $\mathcal L_{\mathcal R}=\{x\mid\exists w:\mathcal R(x,w)=1\}$ denote its language.

\begin{Definition}[Non-interactive zero-knowledge proofs]\label{def:nizk}
Let $\mathcal R(x,w)$ be an efficiently decidable relation with public statement $x$ and private witness $w$. A non-interactive proof system comprises $(\ParGen,\NIZK,\Verify)$, with $\pp\rexec\ParGen(1^\secpar)$, $\pi\rexec\NIZK(\pp,x,w)$ for $\mathcal R(x,w)=1$, and acceptance decided by $\Verify(\pp,x,\pi)$. Perfect completeness requires every honest proof for a valid pair $(x,w)$ to verify. In the path relation below, $x$ specifies the issuer key, graph identifier, endpoints, policies, public bound and context, whereas $w$ contains certified nodes and edges, their hidden attributes, signatures, and common binding tag.
\end{Definition}
\end{samepage}

We write a proof of knowledge using the notation introduced in \cite{DBLP:conf/crypto/CamenischS97}:
\[
  \pi\rexec\NIZK[(w):\mathcal R(x,w)=1](\ctx)\,.
\]
Here $w$ denotes the secret witness, $x$ is public, and the context $\ctx$ is bound to the statement and proof, which can be used, e.g., to ensure freshness of the proof.

\begin{Definition}[Zero knowledge~\cite{DBLP:conf/stoc/BlumFM88,DBLP:conf/cans/BobolzEKRS21}]\label{def:zero-knowledge}
There are \PPT simulators $(\Sim_1,\Sim_2)$ such that $\Sim_1(1^\secpar)$ generates $(\pp,\tau_{\mathsf{sim}})$ and $\Sim_2(\tau_{\mathsf{sim}},x)$ generates a proof without $w$. No \PPT distinguisher that chooses a valid pair $(x,w)$ after seeing $\pp$ distinguishes real parameters and proofs from simulated ones with more than negligible advantage.
\end{Definition}

In our context, for a fixed public request, the proof shall not disclose which certified route supplied its witness. The public endpoint and policy values remain visible.

\medskip

Weak simulation extractability~\cite{DBLP:conf/indocrypt/FaustKMV12} intuitively means that, even after an adversary has seen arbitrarily many simulated proofs, any new accepting proof it produces must still correspond to a valid witness that an extractor can recover. 
The extractor is allowed to rewind the adversary, so the notion is weaker than straight-line simulation extractability but still strong enough for security reductions based on Fiat--Shamir proofs.
In the unforgeability proof of our generic construction, the extracted witness supplies the individual signatures and a complete path. 
Either those signatures authenticate an actual route in a certified graph or one of them authenticates a previously unsigned vector. This reduction uses extractability after simulated presentation-oracle replies, which ordinary soundness or zero knowledge alone does not provide. 

However, it turns out that this property is slightly too strong in our setting, as we only consider forgeries for fresh statements, as we will see in the following sections.
We thus define a weaker notion of this property as follows:

\begin{Definition}[Fresh-statement weak simulation extractability]
\label{def:wse}
A non-interactive proof system in the random-oracle model is \emph{fresh-statement weakly simulation extractable} if, for every \PPT adversary with adaptive access to the proof simulator, there exists an efficient extractor that may rewind the adversary and has access to its random tape and the simulator and random-oracle transcripts.
Whenever the adversary produces such a fresh accepting proof $\pi^*$ with more than negligible probability, the extractor outputs a corresponding valid witness $w^*$ with non-negligible probability polynomially related to the adversary's success probability.
\end{Definition}
%\begin{Definition}[Simulation-sound extractability~\cite{DBLP:conf/asiacrypt/Groth06,DBLP:conf/cans/BobolzEKRS21}]\label{def:sse}
%There are \PPT algorithms $(\Sim_1,\allowbreak\Sim_2,\Ext)$ with joint simulated setup $(\pp,\tau_{\mathsf{sim}},\tau_{\mathsf{ext}})\rexec\Sim_1(1^\secpar)$. For any \PPT adversary querying $\Sim_2(\tau_{\mathsf{sim}},\cdot)$ on statements in $Q$, an accepting proof $\pi^*$ for $x^*\notin Q$ yields $w^*\rexec\Ext(\tau_{\mathsf{ext}},x^*,\pi^*)$ satisfying $\mathcal R(x^*,w^*)=1$, except with negligible probability.
%\end{Definition}

\section{Syntax and Security Model}\label{s:model}
We formalize two guarantees for a provider presenting evidence about an audited QKD network: an accepting token must correspond to a certified compliant route, while hiding the selected route and internal topology beyond the public request. Certification concerns a network snapshot, and a later presentation answers a customer's specific request.

We model the network as a directed graph $\GG=(\vertices,\edges)$, representing an undirected link by two directed edges with the same attributes. Every node $\node$ has an identifier $\id^\node$. Endpoint identifiers may be public while internal identifiers and node counts remain private. The map $\node\mapsto\id^\node$ is injective within each certified graph, though not necessarily across graphs. 

A node carries $n$ attributes and an edge carries $m$ attributes, authenticated and tested against public predicates $\policy_\node$ and $\policy_\edge$. Examples include an approved repeater vendor and certified link capacity. The graph identifier $\id_\GG$ designates the audited graph or epoch. 

\subsection{Syntax}\label{s:syntax}
A scheme has six probabilistic polynomial-time algorithms:
$\ParGen$, $\KGen$, $\Issue$, $\VerifyCert$, $\Present$, and $\Verify$.
Write $\mathsf A=(\mathsf A_\node,\mathsf A_\edge)$ for the node and edge attribute maps, and $g=\id_\GG$. 

\begin{description}[nosep]
\item[Parameter generation.] $\pp\rexec\ParGen(1^\secpar,n,m)$ sets public cryptographic and attribute parameters.
\item[Key generation.] $(\isk,\ipk)\rexec\KGen(\pp)$ returns the issuer's secret and public keys.
\item[Certification.] $\cert\rexec\Issue(\GG,\id_\GG,\{(\attr_i^\node)_{i=1}^n\}_{\node\in\vertices},\{(\attr_i^\edge)_{i=1}^m\}_{\edge\in\edges},\isk)$ certifies a graph, graph identifier, node and edge attributes $\{(\attr_i^\node)_{i=1}^n\}_{\node\in\vertices}$ and $\{(\attr_i^\edge)_{i=1}^m\}_{\edge\in\edges}$, and may reject malformed inputs. The provider retains $\cert$.
\item[Certificate validation.] Given $\cert,\GG,\id_\GG,\{(\attr_i^\node)_{i=1}^n\}_{\node\in\vertices},\{(\attr_i^\edge)_{i=1}^m\}_{\edge\in\edges}$ and $\ipk$, algorithm $\VerifyCert$ checks whether the provided certificate is valid for the given graph.
\item[Presentation.] Given $\cert,\ipk,\GG,\id_\GG,\{(\attr_i^\node)_{i=1}^n\}_{\node\in\vertices},\{(\attr_i^\edge)_{i=1}^m\}_{\edge\in\edges},\policy_\node,\allowbreak\policy_\edge,\allowbreak\id^{\node_\source},\allowbreak\id^{\node_\dest},\allowbreak\ell,\allowbreak\ctx$ as inputs, algorithm $\Present$ returns a token $\pi$ for a feasible request.
\item[Verification.] $\Verify(\pi,\ipk,\id_\GG,\policy_\node,\policy_\edge,\id^{\node_\source},\id^{\node_\dest},\ell,\ctx)$ checks the public request without seeing the graph, certificate, or route.
\end{description}

Correctness requires an honestly issued certificate to pass $\VerifyCert$ and an honest token for a feasible request to pass $\Verify$, except with negligible probability. The public context $\ctx$ binds a token to a request. Applications can include a session identifier, but the scheme does not itself supply freshness.

Here $\ell\geq 1$ counts \emph{node positions}, including endpoints, in a padded witness: a real route uses at most $\ell-1$ genuine edges. The predicate $\ValidPath$ checks that route in the input graph before padding. The construction fills remaining positions with issuer-certified destination self-loops. Padding loops are exempt from the genuine-link predicate, while the repeated destination still satisfies the node predicate. Thus the public $\ell$ need not reveal the actual route length.

\subsection{Security Model}\label{s:security_model}
On a high level, we require two security properties: firstly, \emph{unforgeability} guarantees that every valid presentation was derived from a certificate that was previously issued for the given graph identifier. 
Secondly, \emph{graph hiding} requires that by merely looking at the presentation token, it should not be possible to infer any additional information about the underlying graph.
For notational convenience, the security definitions will rely on an auxiliary algorithm $\ValidPath$, which takes as input a graph, its associated node and edge attributes, node and edge policies, source and destination node identifiers, and a maximum path length $\ell$. 
It outputs $1$ if there exists a path from the source to the destination of length at most $\ell$ such that every node and edge on the path satisfies the corresponding policy, and outputs $0$ otherwise.

\subsubsection{Unforgeability}
We require that no adversary can generate a valid presentation $\pi^*$ without having
received a certificate for the related policies. 
More formally, akin to unforgeability of attribute-based credentials, we allow the adversary to obtain arbitrary credentials on graphs of their choice. 
Furthermore, the adversary may request presentation tokens from uncorrupted certificates. 
Eventually, the adversary outputs a presentation token $\pi^*$, together with a graph identifier $\id_\GG$, policies $\policy_\node^*$ and $\policy_\edge^*$ , source and destination nodes $\id^{\node_S*}$ and $\id^{\node_D*}$ as well as target length $\ell^*$ and context $\ctx^*$. 
The adversary wins if $\pi^*$ is valid, if for the given presentation requirements no other presentation token has been queried, and if no corrupted certificate satisfies the required policies and path requirements.

\begin{Definition}[Unforgeability]\label{def:unforgeability}
A scheme is unforgeable if, for every probabilistic polynomial-time adversary $\adv$ and supported attribute counts $n,m$, there is a negligible function $\negl$ such that
\[
\Pr\bigl[\mathrm{Exp}_{\mathrm{unforgeability}}^{\adv}(\secpar,n,m)=1\bigr]
\leq\negl(\secpar).
\]
The oracles and precise winning conditions appear in \cref{fig:unforgeability}.
\end{Definition}

\begin{figure}[th!]
	\centering\renewcommand{\arraystretch}{1.2}
	\begin{tabular}{|l|}
		\hline
		\rule{0pt}{2.5ex}\textbf{Experiment} $\text{Exp}_{\text{unforgeability}}^{\adv}(\secpar,n,m)$ \\ \hline
        $\Qcert,\Qpres,\Qcorr\exec\emptyset$\\
        $\ctr\exec 0$\\
		$\pp\rexec\ParGen(1^\lambda,n,m)$\\
		$(\isk,\ipk)\rexec\KGen(\pp)$\\
        $(\pi^*,\id_\GG^*,\policy_\node^*,\policy_\edge^*,\id^{\node_\source*},\id^{\node_\dest*},\ell^*,\ctx^*)\rexec\adv^{\Ocert,\Ocorr,\Opres}(\pp,\ipk)$ \\
%%% \Ocert
    \scriptsize
        \quad where $\Ocert$ on input $(\GG,\id_\GG,\{(\attr_i^\node)_{i=1}^n\}_{\node\in\vertices},\{(\attr_j^\edge)_{j=1}^m\}_{\edge\in\edges})$ behaves as follows:\\
    \scriptsize
        \quad\quad $\ctr\exec\ctr+1$\\
    \scriptsize
        \quad\quad $\cert \rexec \Issue(\GG,\id_\GG,\{(\attr_i^\node)_{i=1}^n\}_{\node\in\vertices},\{(\attr_j^\edge)_{j=1}^m\}_{\edge\in\edges},\isk)$\\
    \scriptsize
        \quad\quad $\Qcert \exec \Qcert \cup \{(\cert,\ctr,\GG,\id_\GG,\{(\attr_i^\node)_{i=1}^n\}_{\node\in\vertices},\{(\attr_j^\edge)_{j=1}^m\}_{\edge\in\edges})\}$\\
    \scriptsize
        \quad\quad Return $\ctr$\\
%%% \Opres
    \scriptsize
        \quad where $\Opres$ on input $(\ctr,\policy_\node,\policy_\edge,\id^{\node_\source},\id^{\node_\dest},\ell,\ctx)$ behaves as follows:\\
    \scriptsize
        \quad\quad if $\nexists(\cert,\ctr,\GG,\id_\GG,\{(\attr_i^\node)_{i=1}^n\}_{\node\in\vertices},\{(\attr_j^\edge)_{j=1}^m\}_{\edge\in\edges})\in\Qcert$, return $\bot$\\
    \scriptsize
        \quad\quad if $\ValidPath(\GG,\{(\attr_i^\node)_{i=1}^n\}_{\node\in\vertices},\{(\attr_j^\edge)_{j=1}^m\}_{\edge\in\edges},\policy_\node,\policy_\edge,\id^{\node_\source},\id^{\node_\dest},\ell)=0$, return $\bot$\\
    \scriptsize
        \quad\quad $\pi \rexec \Present(\cert,\ipk,\GG,\id_\GG,\{(\attr_i^\node)_{i=1}^n\}_{\node\in\vertices},\{(\attr_j^\edge)_{j=1}^m\}_{\edge\in\edges},\policy_\node,\policy_\edge,\id^{\node_\source},\id^{\node_\dest},\ell,\ctx)$\\
    \scriptsize
        \quad\quad $\Qpres\exec\Qpres\cup\{(\pi,\ctr,\id_\GG,\policy_\node,\policy_\edge,\id^{\node_\source},\id^{\node_\dest},\ell,\ctx)\}$\\
    \scriptsize
        \quad\quad Return $\pi$\\
%%% \Ocorr
    \scriptsize
        \quad where $\Ocorr$ on input $\ctr$ behaves as follows:\\
    \scriptsize
        \quad\quad if $\nexists(\cert,\ctr,.,.,.,.)\in\Qcert$, return $\bot$\\
     \scriptsize
        \quad\quad $\Qcorr\exec\Qcorr\cup\{\ctr\}$\\
    \scriptsize
        \quad\quad Return $\cert$\\
%%% End oracles
    \normalsize
        Return $1$ if:\\
        \quad $\Verify(\pi^*,\ipk,\id_\GG^*,\policy_\node^*,\policy_\edge^*,\id^{\node_\source*},\id^{\node_\dest*},\ell^*,\ctx^*)=1$, and\\
        \quad $\nexists (.,.,\id_\GG^*,\policy_\node^*,\policy_\edge^*,\id^{\node_\source*},\id^{\node_\dest*},\ell^*,\ctx^*)\in\Qpres$, and\\
        \quad $\nexists (\cert,\ctr,\GG,\id_\GG^*,\{(\attr_i^\node)_{i=1}^n\}_{\node\in\vertices},\{(\attr_j^\edge)_{j=1}^m\}_{\edge\in\edges})\in\Qcert$ such that \\
        \quad\quad\quad $\ctr\in\Qcorr \land 
        \ValidPath(\GG,\{(\attr_i^\node)_{i=1}^n\}_{\node\in\vertices},\{(\attr_j^\edge)_{j=1}^m\}_{\edge\in\edges},\policy_\node^*,\policy_\edge^*,\id^{\node_\source*},\id^{\node_\dest*},\ell^*)=1$\\

        Return $0$\\
		\hline
	\end{tabular}
	\caption{The unforgeability experiment $\text{Exp}_{\text{unforgeability}}^{\adv}(\secpar,n,m)$.}
	\label{fig:unforgeability}
\end{figure}

\subsubsection{Graph Hiding}
For graph hiding, the adversary even chooses the issuer key and both candidate certificates, together with their attributed graphs.
The candidates must use the same graph identifier and satisfy the \emph{same} public request, and both certificates must validate under the chosen issuer key.
The challenger then presents one candidate selected by a hidden random bit.
This follows the canonical indistinguishability-based privacy formulation commonly used for anonymous and attribute-based credentials:
the adversary controls the issuer-side material and provides both candidate witnesses, while the challenge presentation hides which of the two was used.
In particular, no certification oracle is needed, since the adversary already controls the issuer and both candidate certificates and can generate arbitrary auxiliary presentations on either known candidate by itself.
The challenge therefore captures exactly whether the presentation token leaks which certified witness was used.
This rules out a token that reveals, for example, which internal repeater, edge, attribute value, or valid certificate supplied the witness, despite the verifier not receiving any of those objects.

The validity and path checks are essential:
if one candidate could not answer the request, acceptance alone would identify the selected graph.
The game compares presentations for a fixed public request, consistently with deliberately exposing the chosen endpoints, policies, graph identifier, bound, and context.
Padding to the same public $\ell$ ensures that the actual route length need not be exposed by the witness shape.
As in standard credential-privacy games, the definition isolates the privacy of the challenge presentation from information that is already available through the public statement or through independently generated presentations.
For $k\in\{0,1\}$, write $\GG_k=(\vertices_k,\edges_k)$ for the two candidate graphs.

\begin{Definition}[Graph hiding]\label{def:graph_hiding}
A scheme is graph hiding if, for every probabilistic polynomial-time adversary $\adv$ and supported $n,m$, there is a negligible function $\negl$ such that
\[
\Pr\bigl[\mathrm{Exp}_{\mathrm{graph\text{-}hiding}}^{\adv}(\secpar,n,m)=1\bigr]
\leq\frac12+\negl(\secpar).
\]
The full experiment is given in \cref{fig:graph_hiding}.
\end{Definition}

  \begin{figure}[th!]
	\centering\renewcommand{\arraystretch}{1.2}
	\begin{tabular}{|l|}
		\hline
		\rule{0pt}{2.5ex}\textbf{Experiment} $\text{Exp}_{\text{graph-hiding}}^{\adv}(\secpar,n,m)$ \\ \hline
        $b\rexec\bin$\\
        $\pp\rexec\ParGen(1^\lambda,n,m)$\\
        $(\ipk,\id_\GG,(\cert_k,\GG_k,\{(\attr_{ik}^\node)_{i=1}^n\}_{\node\in\vertices_k},\{(\attr_{jk}^\edge)_{j=1}^m\}_{\edge\in\edges_k})_{k=0,1},\policy_\node,\policy_\edge,\id^{\node_\source},\id^{\node_\dest},\ell,\ctx,\state)\rexec\adv(\pp)$ \\
        $\pi^* \rexec \Present(\cert_b,\ipk,\GG_b,\id_\GG,\{(\attr_{ib}^\node)_{i=1}^n\}_{\node\in\vertices_b},\{(\attr_{jb}^\edge)_{j=1}^m\}_{\edge\in\edges_b},\policy_\node,\policy_\edge,\id^{\node_\source},\id^{\node_\dest},\ell,\ctx)$\\
        $b'\rexec\adv(\state,\pi^*)$\\
        Return 0 if:\\
        \quad $\VerifyCert(\cert_k,\GG_k,\id_\GG,\{(\attr_{ik}^\node)_{i=1}^n\}_{\node\in\vertices_k},\{(\attr_{jk}^\edge)_{j=1}^m\}_{\edge\in\edges_k},\ipk)=0$ for some $k=0,1$, or\\
        \quad $\ValidPath(\GG_k,\{(\attr_{ik}^\node)_{i=1}^n\}_{\node\in\vertices_k},\{(\attr_{jk}^\edge)_{j=1}^m\}_{\edge\in\edges_k},\policy_\node,\policy_\edge,\id^{\node_\source},\id^{\node_\dest},\ell)=0$ for some $k=0,1$\\
        Return $[b=b']$\\
		\hline
	\end{tabular}
	\caption{The graph hiding experiment $\text{Exp}_{\text{graph-hiding}}^{\adv}(\secpar,n,m)$.}
	\label{fig:graph_hiding}
\end{figure}

\section{Topology Certification from Multi-Message Signatures}\label{s:abcs}
We construct a topology certificate from signatures on individual nodes and directed edges, and a zero-knowledge proof that selected signed components form a route. This separation lets the issuer certify a complete graph once while a provider later proves a customer-specific path statement without revealing the graph. We give the generic algorithms, state their security with proof sketches, and then describe the concrete BBS/Schnorr proof core. Full generic proofs and algebraic details appear in the supplementary material.

The input graph $\GG'$ has no self-loops, and its node identifiers are unique within that graph. The issuer rejects malformed inputs instead of signing them. A distinguished symbol $\bot$ lies outside every genuine edge-attribute domain, and every supported edge policy satisfies $\policy_\edge(\bot,\ldots,\bot)=0$. Only the issuer adds self-loops, marked with this tuple, to pad shorter routes. A genuine edge cannot exploit the padding branch because its endpoints have distinct identifiers, whereas a padding loop cannot exploit the genuine-edge branch because its attributes fail the edge policy.

\subsection{A Generic Construction}\label{s:generic_construction}
  The construction uses a multi-message signature scheme and a non-interactive proof system for the complete path relation as defined in \cref{s:building_blocks}. Their algorithms carry subscripts $\MMS$ and $\ZKP$. The issuer and provider may work with the complete graph, but the verifier sees only the issuer's public keys and the request-specific proof.

\begin{description}
    \item[Parameter generation.]
      On input the security parameter $1^\secpar$, the number $n$ of node attributes, and the number $m$ of edge attributes, the system parameters are generated as the concatenation of parameters of the building blocks:
      $$
        \pp_{\MMS_\edge}\rexec\ParGen_{\MMS_\edge}(1^\secpar,m+4),\ 
        \pp_{\MMS_\node}\rexec\ParGen_{\MMS_\node}(1^\secpar,n+3),\ \text{and}\ 
        \pp_\ZKP\rexec\ParGen_\ZKP(1^\secpar).
      $$
      The system parameters are $\pp\rexec(\pp_{\MMS_\node},\pp_{\MMS_\edge},\pp_\ZKP)$. A signed node vector has $n+3$ entries: graph identifier, binding tag, node identifier, and $n$ attributes. An edge vector has $m+4$ entries because it includes both endpoint identifiers. Keeping the vector domains distinct also prevents a node signature from being interpreted as an edge signature.
\end{description}

\begin{description}
    \item[Key generation.]
      The issuer generates two key pairs of a multi-message signature scheme supporting efficient protocols as 
      \begin{align*}
        (\isk_\edge,\ipk_\edge) \rexec\KGen_{\MMS_\edge}(\pp_{\MMS_\edge})\quad\text{and}\quad
        (\isk_\node,\ipk_\node) \rexec\KGen_{\MMS_\node}(\pp_{\MMS_\node})\,.
      \end{align*}

      It sets $(\ipk,\isk)\rexec((\ipk_\edge,\ipk_\node),(\isk_\edge,\isk_\node))$. Both verification keys are public, and neither signing key is needed when the provider creates a presentation.
\end{description}

\begin{description}
    \item[Certification.]
      Certification follows a multi-step approach:
      \begin{enumerate}
        \item
          The issuer checks that the node identifiers of $\GG'$ are pairwise distinct and that $\GG'$ contains no self-loops. It then augments $\GG'$ to a graph $\GG$ by adding one directed loop to every vertex and leaving the rest unchanged.
          
          \emph{Rationale:} This will allow us to pad the path lengths whenever needed by repeatedly adding these loops to the actual path.
        \item 
          For every newly added edge, all attributes are set to a special symbol~$\bot$.
          
          \emph{Rationale:} No properties about loops will be proven during presentation, thereby making sure that padding the path does not break any policies that the actual path would have satisfied.
        \item 
          For every edge in the graph $\GG$ (existing or newly added), two additional attributes $\source$ and $\dest$ are added, corresponding to the identifiers of the source and destination nodes of the edge.

          \emph{Rationale:} This will allow us to prove that a set of edges indeed forms a valid path on the graph.
        \item
          The issuer samples a random  binding tag $\rnd\rexec\bin^{\secpar}$.

          \emph{Rationale:} This tag is needed to uniquely bind together all partial signatures belonging to the same graph in order to avoid mix-and-match attacks from combining, e.g., edges from different graphs.
        \item
          For every node $\node\in\vertices$ with attributes $\attr^\node_i$ ($i\in[n]$), the issuer computes a multi-message signature $\sig_\node$ as follows:
          $$
            \sig_\node\rexec\Sign_{\MMS_\node}(\isk_\node,(\id_\GG,\rnd,\id^\node,\attr^\node_1,\dots,\attr^\node_n))\,.
          $$
        \item
          Similarly, for every edge $\edge\in\edges$ with attributes $\attr^\edge_j$ ($j\in[m]$), the issuer computes a multi-message signature $\sig_\edge$ as follows:
          $$
            \sig_\edge\rexec\Sign_{\MMS_\edge}(\isk_\edge,(\id_\GG,\rnd,\id^\source,\id^\dest,\attr^\edge_1,\dots,\attr^\edge_m))\,.
          $$
        \item 
          The certificate is given by
          $$
            \cert\rexec\left(
            \left\{\sig_\edge\right\}_{\edge\in\edges},
            \left\{\sig_\node\right\}_{\node\in\vertices},
            \rnd
            \right).
          $$
      \end{enumerate}
      The random tag is shared by every component of this \emph{one} certification run. A graph identifier alone would not prevent combining components from distinct runs that happen to use the same identifier. The signature on each edge also binds its ordered endpoint identifiers, rather than relying on an unsigned adjacency list supplied by the prover.
\end{description}

\begin{description}
    \item[Certificate validation.]
      This algorithm performs canonical certificate validation. It rejects repeated node identifiers and any self-loop in the input graph, reconstructs the issuer-added loops, and checks that each node and each augmented edge has a valid signature on its exact identifiers, attributes, graph identifier, and common tag. A missing or mismatched component causes rejection. The provider need not transmit this complete graph to customers: their later $\Verify$ call checks a proof about selected components instead.
\end{description}

\begin{description}
    \item[Presentation.]
      Presentation is again a multi-step process:
      \begin{enumerate}
          \item 
            In a first step, the prover determines the necessary path from the source to the destination node, containing the necessary nodes and edges.
            Given a route with at most $\ell$ node positions (at most $\ell-1$ genuine edges), the prover adds as many copies of the loop on the destination node as needed to obtain exactly $\ell$ node positions.
            At the end, it obtains a path
            $$
              (\node_S,\edge_1,\node_2,\edge_2,\dots,\node_D,\selfloop_D,\node_D,\dots,\node_D)\,,
            $$
            which in the following we will refer to as
            $$
              (\node_1,\edge_1,\dots\node_{\ell-1},\edge_{\ell-1},\node_\ell).
            $$
          \item 
            Write $\mathbf a_k=(\attr^{\node_k}_i)_{i=1}^n$ for the attributes of node position $k$ and $\mathbf b_l=(\attr^{\edge_l}_j)_{j=1}^m$ for those of edge position $l$. All intermediate identifiers, attributes, signatures, and the tag belong to the private witness. The proof checks every signature, every node policy, the edge-policy-or-padding condition, and the requested endpoints:
            
            \begin{align*}
                \pi\rexec&\NIZK_\ZKP\bigg[\bigg(%
                \left(\sig_{\node_k},\mathbf a_k,\id^{\node_k}\right)_{k=1}^\ell,
                \left(\sig_{\edge_l},\mathbf b_l\right)_{l=1}^{\ell-1}, \rnd\bigg):\\
                &\bigwedge_{k=1}^{\ell}\bigg[\Verify_{\MMS_\node}
                  \left(\ipk_\node,\left(\id_\GG,\rnd,\id^{\node_k},\mathbf a_k\right),\sig_{\node_k}\right)=1\,\land\,
                \policy_\node(\mathbf a_k)=1\bigg] \,\land\\
                &\bigwedge_{l=1}^{\ell-1}\bigg[\Verify_{\MMS_\edge}
                  \left(\ipk_\edge,\left(\id_\GG,\rnd,\id^{\node_l},\id^{\node_{l+1}},\mathbf b_l\right),\sig_{\edge_l}\right)=1\,\land\,\left(\policy_\edge(\mathbf b_l)=1\,\lor\,\id^{\node_l}=\id^{\node_{l+1}}\right)\bigg]\,\land\\
                &\id^{\node_1}=\id^{\node_S}\,\land\,\id^{\node_\ell}=\id^{\node_D}\bigg](\ctx)\,.
            \end{align*}
            The edge signature is verified on the identifiers of the \emph{adjacent signed nodes}. Because node identifiers are unique, these equality checks prove that the components actually meet. Each edge position must either satisfy the public edge policy or repeat the same node identifier. Since only the issuer-added loops have equal endpoints, removing those positions leaves a genuine route in the audited graph. The first and last identifiers are public, whereas the internal positions, and therefore the actual number of genuine edges, remain hidden beneath the bound $\ell$.

            For example, a route $S\to R\to D$ has three node positions. Under a public bound $\ell=5$, the prover uses the private sequence $S,R,D,D,D$, with two signed loops on $D$ after the two genuine links. All five node positions meet the node policy. The first two edge positions meet the link policy, and the last two use the padding branch. If a longer compliant route exists, it can answer the same public request. The customer sees the bound and the proof, not which sequence was used.
          \item 
            The prover outputs $\pi$.
        \end{enumerate}
\end{description}

\begin{description}
    \item[Verification.]
      The verifier checks $\pi$ against both issuer public keys, the authorized graph identifier, the endpoint identifiers, node and edge policies, the public bound $\ell$, and the context $\ctx$. It accepts if and only if the proof verifies for this complete statement, without requiring the certificate or the route.
\end{description}

For a previously selected route, the relation has exactly $\ell$ node-signature checks and $\ell-1$ edge-signature checks, together with policy and adjacency checks for those positions. This explains why the proof core scales with the padded route and its attributes rather than the total number of network components. Issuing all signatures and discovering a route still depend on the size and structure of the graph.

\subsection{Security of the Generic Construction}\label{s:generic_proof}
The security statements below concern the generic construction instantiated with a proof system for the \emph{complete} path relation. Complete game-based proofs are in \cref{s:supplement_generic_proofs}.

\begin{Theorem}\label{thm:unforgeability}
The generic construction is unforgeable according to \cref{def:unforgeability} if the node and edge multi-message signatures are EUF-CMA secure and the non-interactive proof system is zero-knowledge and fresh-statement weakly simulation extractable.
\end{Theorem}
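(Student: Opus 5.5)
We prove \cref{thm:unforgeability} by a short sequence of game hops followed by a case analysis on the extracted witness. Let $\adv$ win $\mathrm{Exp}_{\mathrm{unforgeability}}^{\adv}$ with probability $\epsilon$.

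\emph{Simulating presentations.} In $\mathrm{Game}_1$ we replace $\pp_\ZKP$ by simulated parameters from $\Sim_1$. Every $\Opres$ answer is then produced by $\Sim_2$ on the public statement $(\ipk,\id_\GG,\policy_\node,\policy_\edge,\id^{\node_\source},\id^{\node_\dest},\ell,\ctx)$. This is only done when $\ValidPath$ holds, so the honest statement has a witness. Zero knowledge, applied through a standard hybrid over the polynomially many oracle queries, bounds the change in $\adv$'s success by a negligible term. After this step the presentation oracle no longer touches any certificate, so the reductions below only need the signatures that $\Ocert$ produces.

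\emph{Extraction.} The winning condition says no $\Qpres$ entry has the statement of $\pi^*$. Hence $\pi^*$ is an accepting proof for a fresh statement, relative to the simulator queries. Fresh-statement weak simulation extractability (\cref{def:wse}) therefore yields an extractor. With probability polynomially related to $\epsilon$, it outputs a witness
\[
\bigl((\sig_{\node_k},\mathbf a_k,\id^{\node_k})_{k=1}^{\ell^*},(\sig_{\edge_l},\mathbf b_l)_{l=1}^{\ell^*-1},\rnd\bigr)
\]
satisfying the path relation. We then split into two cases.

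\emph{Case A: some extracted vector was never signed.} Some node vector $(\id_\GG^*,\rnd,\id^{\node_k},\mathbf a_k)$ or edge vector $(\id_\GG^*,\rnd,\id^{\node_l},\id^{\node_{l+1}},\mathbf b_l)$ was never signed by $\Ocert$. We build EUF-CMA reductions $\mathcal B_\node$ and $\mathcal B_\edge$. Each embeds its challenge key as $\ipk_\node$ or $\ipk_\edge$, generates the other key itself, and answers $\Ocert$ and $\Ocorr$ using its signing oracle. This is possible because the certificate consists only of signatures and $\rnd$, and presentations are simulated. The reduction runs the rewinding extractor and outputs the unsigned vector with its signature as a forgery. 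Rewinding is compatible with the reduction because $\mathcal B$ controls all oracles and the random oracle, so it can replay them consistently. The loss is polynomial.

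\emph{Case B: every extracted vector was signed.} Each vector was signed during some certification query. Since $\rnd\in\bin^\secpar$ is fresh per run, all components with the same tag come from a single run $\ctr$, except with probability at most $q^2/2^\secpar$ that two runs share a tag. So all components belong to one certified graph $\GG$ with identifier $\id_\GG^*$.

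Within $\GG$, node identifiers are unique. The edge vectors carry the identifiers of the adjacent node positions, so consecutive components genuinely meet. Each edge either satisfies $\policy_\edge^*$ or has equal endpoints. Equal endpoints occur only for issuer-added loops, since the input graph $\GG'$ has no self-loops. A loop cannot satisfy $\policy_\edge^*$ because $\policy_\edge(\bot,\ldots,\bot)=0$. Deleting the loop positions therefore leaves a walk from $\id^{\node_\source*}$ to $\id^{\node_\dest*}$ with at most $\ell^*-1$ genuine edges. Every node on it satisfies $\policy_\node^*$ and every edge satisfies $\policy_\edge^*$. Shortcutting repeated vertices turns this walk into a path, so $\ValidPath(\GG,\ldots,\ell^*)=1$.

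If $\ctr\in\Qcorr$, this contradicts the third winning condition. The remaining subcase is $\ctr$ uncorrupted. Here $\adv$ never saw $\cert$ except through simulated proofs, which are independent of it. Case B yields no contradiction by itself in this subcase.

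\emph{Main obstacle: the uncorrupted subcase.} As the game is written, a proof that uses an honestly certified, uncorrupted graph on a fresh statement is not excluded. The plan is to show that this event is negligible. The adversary's view is independent of the uncorrupted $\rnd$ and of the signatures, since neither appears outside simulated proofs. Any extracted witness containing signatures under that run's $\rnd$ must then contain a signature on a vector the adversary never received. Such a signature is not a fresh-vector forgery in the syntactic sense, because the issuer did sign that vector.

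To handle this I would first modify the reduction in Case A. It would keep signatures of uncorrupted runs hidden, that is, never query the signing oracle for them and instead sample $\rnd$ itself without signing those components. This is possible only because simulated presentations never need them. With this change, any extracted component from an uncorrupted run is a genuine EUF-CMA forgery. Making this adjustment rigorous, including the lazy signing of components at corruption time, is the step I expect to be most delicate.
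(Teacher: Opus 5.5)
Your proposal is correct and follows essentially the paper's proof. Both simulate the presentation replies, exclude binding-tag collisions, and extract a witness via fresh-statement weak simulation extractability. Both then use lazy certificate generation, requesting signatures from the EUF-CMA oracle only when a certificate is corrupted. This makes components under a fresh or uncorrupted tag genuine forgeries, while an all-signed witness under a corrupted tag yields a compliant route that contradicts the winning condition. The only difference is organizational: the paper splits cases by the status of the extracted tag (fresh, uncorrupted, corrupted) rather than by signed versus unsigned vectors. It also treats the lazy-signing step at no greater level of detail than your sketch.
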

\paragraph{Proof sketch.}
Replace presentation-oracle replies by simulated proofs, and abort if the issuer reuses a graph binding tag, an event of negligible probability. Fresh-statement weak simulation extractability then yields a witness from a successful presentation for a fresh full public statement. A fresh tag, or a tag of an uncorrupted certificate whose signatures were generated lazily, gives a new signed node vector. If the tag belongs to a corrupted certificate, signatures already issued for that certificate would authenticate a compliant route after removing padding loops, contradicting the winning condition. Thus some extracted node or edge signature is on a new vector, violating EUF-CMA. The complete argument, including oracle simulation and the tag-collision bound, appears in \cref{s:supplement_unforgeability_proof}.\qed

\begin{Theorem}\label{thm:graph-hiding}
The generic construction is graph hiding according to \cref{def:graph_hiding} if the non-interactive proof system is zero-knowledge.
\end{Theorem}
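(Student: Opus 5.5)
The plan is a two-hybrid argument that replaces the real challenge proof with a simulated one computed only from the public statement. Let $\mathrm{H}_0$ be the experiment of \cref{fig:graph_hiding}. Let $\mathrm{H}_1$ be identical, except that $\pp_\ZKP$ is produced by $\Sim_1(1^\secpar)$, which also outputs a trapdoor $\tau_{\mathsf{sim}}$. In $\mathrm{H}_1$, whenever both $\VerifyCert$ and $\ValidPath$ checks pass, the challenge token is computed as $\pi^*\rexec\Sim_2(\tau_{\mathsf{sim}},x)$. The public statement is
\[
  x=(\ipk,\id_\GG,\policy_\node,\policy_\edge,\id^{\node_\source},\id^{\node_\dest},\ell,\ctx),
\]
which is identical for $b=0$ and $b=1$. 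In $\mathrm{H}_1$ the view of $\adv$ is therefore independent of $b$, so its success probability is exactly $\tfrac12$. It then suffices to show that $\mathrm{H}_0$ and $\mathrm{H}_1$ are computationally indistinguishable.

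The key step, and the one requiring care, is arguing that the real prover in $\mathrm{H}_0$ is always invoked on a valid pair $(x,w)$ whenever the adversary can possibly win. Only then is \cref{def:zero-knowledge} applicable. Suppose both checks pass. Then $\VerifyCert(\cert_b,\ldots)=1$ guarantees three things: node identifiers are distinct, the graph has no genuine self-loops, and every node and every augmented edge, including the issuer-style $\bot$-loops, carries a valid signature under the adversarial $\ipk$ with the common tag $\rnd$. In addition, $\ValidPath(\GG_b,\ldots)=1$ gives a compliant route with at most $\ell$ node positions. Padding this route with the certified destination loop to exactly $\ell$ positions yields a witness that satisfies every conjunct of the path relation.

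Each conjunct can be checked directly. Signatures verify by the validation check. Node policies hold at all positions, including the repeated destination. Genuine edges satisfy $\policy_\edge$, and loops satisfy $\id^{\node_l}=\id^{\node_{l+1}}$. The endpoint identifiers match. Hence correctness of $\Present$ produces an honest proof for a valid pair, and this holds even though $\ipk$ is chosen by the adversary, because validity is measured by the relation itself rather than by any honest key generation. If either check fails, both hybrids output $0$, so it does not matter what the prover did in that case.

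To bound the gap, I would construct a zero-knowledge distinguisher $\mathcal D$. It receives $\pp_\ZKP$, real or simulated, and samples the remaining parameter components honestly. It runs $\adv(\pp)$, performs the $\VerifyCert$ and $\ValidPath$ checks, and builds the padded witness $w_b$ for a random $b$. If the checks fail, it outputs a random bit; otherwise it submits $(x,w_b)$ and receives a proof, which it forwards to $\adv$. Finally it outputs $[b=b']$.

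The main subtlety is that the zero-knowledge game of \cref{def:zero-knowledge} requires the distinguisher to submit a valid pair. This is exactly why $\mathcal D$ checks validity first, and the checks are efficient. A second subtlety is that $\Present$ wraps $\NIZK_\ZKP$ with deterministic witness construction, so the real branch of the zero-knowledge game reproduces $\mathrm{H}_0$ exactly. In the random-oracle setting, $\mathcal D$'s random-oracle queries are also answered by the zero-knowledge challenger. It follows that
\[
  \bigl|\Pr[\mathrm{H}_0=1]-\Pr[\mathrm{H}_1=1]\bigr|
\]
equals the advantage of $\mathcal D$, which is negligible. Combining the two hybrids gives the theorem.
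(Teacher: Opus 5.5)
Your proposal is correct and follows the same route as the paper: you switch to simulated proof parameters and a simulated challenge token, which depends only on the public statement shared by both candidates, so the guess is independent of $b$. Two differences are worth noting. Merging the paper's Games~1 and~2 into one hybrid matches the joint real/simulated formulation of zero knowledge more faithfully, and your explicit argument that $\VerifyCert$ and $\ValidPath$ yield a valid padded witness fills in a step the paper leaves implicit.

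One small fix: when the checks fail, $\mathcal D$ should submit some self-generated valid dummy pair to its challenger and output $0$, not a random bit. Otherwise its advantage differs from $|\Pr[\mathrm{H}_0=1]-\Pr[\mathrm{H}_1=1]|$ by half the difference of the failure probabilities under real and simulated parameters.
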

\paragraph{Proof sketch.}
For the two valid challenge graphs, the issuer key and complete public request coincide. Replace real proof parameters and then the challenge token by their simulated counterparts. The simulated token depends only on the public statement, so it is independent of the challenge graph and certificate. The adversary's remaining guessing probability is $1/2$. See \cref{s:supplement_graph_hiding_proof} for the hybrids.\qed

\subsection{BBS Signatures and Composed Proofs}\label{s:instantiation}
We next instantiate the independently signed components with BBS multi-message signatures and linked Schnorr-style representation proofs. 
The complete BBS algorithms and OR-proof transcripts are given in the supplementary material.

\paragraph{Component signatures.}
For each of the two message-vector formats above, the issuer uses a separate BBS key pair~\cite{DBLP:conf/crypto/BonehBS04,DBLP:conf/eurocrypt/TessaroZ23a}. In prime-order pairing groups, a key $\ipk=(X_1,X_2)=(g_1^x,g_2^x)$ verifies a signature $(A,e)$ on $\vec m=(m_1,\ldots,m_L)$ when
\[
  \e(A,X_2\cdot g_2^e)
  =\e\!\left(g_1\prod_{i=1}^{L}h_i^{m_i},g_2\right).
\]
Therefore, the issuer forms $B=g_1\prod_i h_i^{m_i}$ and samples $e\neq-x$ before computing $A=B^{1/(x+e)}$. The graph identifier, common tag, identifiers, and attributes are \emph{signed messages}, which the provider cannot change during presentation. The two formats respectively have $n+3$ and $m+4$ messages, so signing a newly audited component need not change signatures on other components. The static security model does not by itself establish the security of updates across graph epochs. We discuss their limitations in the extensions section.

\paragraph{Hidden signature and route checks.}
To prove possession of a BBS signature without disclosing it, the prover randomizes $A$ and its signed-message commitment. For fresh nonzero $r_1,r_2$, write $D=B^{r_2}$, $\bar A=A^{r_1r_2}$, and $\bar B=D^{r_1}\bar A^{-e}$. The verifier checks $\e(\bar A,X_2)=\e(\bar B,g_2)$. A Schnorr-style proof links this check to the hidden signed messages through
\[
  \bar B=D^{r_1}\bar A^{-e},\qquad
  g_1=D^{r_2^{-1}}\prod_{i=1}^{L}h_i^{-m_i}.
\]
Across all $\ell$ node and $\ell-1$ edge positions, linked hidden-message proofs reuse the same graph tag, equate signed edge endpoints with the adjacent signed node identifiers, and establish the requested predicates. 
While BBS natively supports selective disclosure, complex predicates can be integrated using standard techniques~\cite{DBLP:conf/crypto/CramerDS94,DBLP:phd/dnb/Krenn12,DBLP:conf/asiacrypt/Lipmaa03}.
For the complete path relation, the Fiat-Shamir transcript must bind the issuer keys, graph identifier, endpoints, both policies, $\ell$, context, and all subproof commitments~\cite{DBLP:journals/joc/Schnorr91,DBLP:conf/crypto/FiatS86}.

\paragraph{Padding disjunction.}
For each edge position, the provider must prove $\policy_\edge(\mathbf b_l)=1$ or $\id^{\node_l}=\id^{\node_{l+1}}$.
A Schnorr OR composition~\cite{DBLP:conf/crypto/CramerDS94} proves the branch for which the provider has a witness and simulates the other.
The verifier checks both branch transcripts and that their challenges add to one Fiat-Shamir challenge.
The challenge includes a domain-separated encoding of the \emph{full} public request and all commitments.
For values authenticated as part of a valid certified edge, the two branches are mutually exclusive:
genuine edges have different endpoints, whereas issuer-added loops carry $(\bot,\ldots,\bot)$ and therefore fail every supported edge policy.
As shown in \cref{sec:supplement-instantiation}, special soundness of the underlying protocols is preserved by the OR composition.
Assuming that the used policy-proofs are special honest verifier zero-knowledge and special sound, the entire interactive presentation protocol possesses the corresponding properties.
Since the unforgeability experiment requires that the final public statement be new with respect to all simulated presentation queries, the standard Fiat--Shamir forking argument yields fresh-statement weak simulation extractability in the random-oracle model.
The resulting Fiat-Shamir proof is therefore zero knowledge and fresh-statement weakly simulation extractable as required by \cref{thm:unforgeability}.

%\paragraph{Security boundary.}
%The benchmarks measure this BBS/Schnorr proof core. Applying \cref{thm:unforgeability} to it requires the \emph{whole composed Fiat-Shamir protocol}, including its OR proof and policy gadgets, to satisfy the extraction property in \cref{def:sse} for fresh full public statements. We do not establish that property here. Encrypting a witness and proving ciphertext consistency is a possible augmentation, but it has not been specified, implemented, or included in the reported costs. Thus the generic theorem remains conditional and the measured core is not claimed to be a fully proven unforgeable instantiation. The supplementary material records the detailed algebra and the remaining extraction obligation.

\subsection{Extensions}
In the following we sketch some possible extensions of our approach.
Full details are given in \cref{sec:extension-details}.

\paragraph{Disjoint Paths}\label{sec:multipath}
Distributing a key across node-disjoint routes reduces reliance on individual trusted repeaters~\cite{acns2026}. An issuer may also sign a stable, high-entropy secret $x_\node$ with each node, retaining the same secret for every valid certification of that node in a graph epoch. For a presentation-specific context $\ctx$, the prover derives $\mathsf{nym}_{\node,\ctx}=H(\ctx)^{x_\node}$ and proves that the exponent is the one in the certified node attributes. Equal intermediate nodes then yield equal pseudonyms under the same context, so the verifier can reject intersecting paths without learning their identifiers~\cite{DBLP:conf/eurocrypt/NaorPR99,DBLP:books/daglib/p/BichselCDEKLNP15}.

When routes are padded, repeated destination-node occurrences must be excluded from the comparison. The proof therefore distinguishes a genuine node with a certified pseudonym from a padding repetition, using a disjunction tied to the same signed path. 

\paragraph{Changes to the Certified Graph}
In contrast to \cite{DBLP:conf/ccs/Gross14}, new nodes or links may be certified with the existing graph identifier and binding tag without resigning unchanged components, provided that added nodes receive identifiers unused in that graph epoch. Removing certified objects requires a new graph epoch with a fresh identifier and a way for verifiers to reject presentations based on superseded epochs because old signatures otherwise remain usable. 

\paragraph{Non-connectivity}
Our construction also supports non-connectivity proofs and thus provides the same core graph-certification functionality as approaches such as~\cite{DBLP:conf/ccs/Gross14}, namely proofs of connectivity and separation.
To do so, the issuer additionally certifies a connected-component identifier for each node.
To prove that two nodes are disconnected, the prover shows valid signatures on both nodes and proves that their component identifiers differ.
This yields a constant-size proof, independent of the graph size.
If an update merges two components, the affected nodes have to be re-certified.

\section{Evaluation}\label{s:evaluation}
We next evaluate the practical costs of our construction along three dimensions:
certificate size, presentation size, and presentation runtime.

For the presentation-size and runtime evaluations, we consider equality-based attribute predicates as a representative and easily quantifiable class of policies.
Such a predicate fixes an attribute to a public value and therefore has, from the verifier's perspective, essentially the same information-disclosure effect as revealing that attribute.
Accordingly, the disclosure rates used below denote the fraction of attributes constrained to public values, rather than attributes that are necessarily transmitted in the clear at every path position.

Our estimates and measurements focus on the cryptographic presentation layer.
They do not include route discovery or other graph-management operations, which may naturally depend on the size and structure of the underlying QKD network.

\subsection{Certificate Size}
Starting from an undirected graph $\GG=(\vertices,\edges)$, the transformation used by our construction preserves the $|\vertices|$ nodes, replaces every undirected edge by two directed edges, and adds one self-loop to every node.
The resulting augmented graph therefore contains $|\vertices|$ nodes and $2|\edges|+|\vertices|$ directed edges, yielding a total of $2|\vertices|+2|\edges|$ certified graph objects.
This results in a certificate size $|\cert|$ of 
$$
    |\cert|=|\vertices|\cdot|\sig_\node|+(2|\edges|+|\vertices|)\cdot|\sig_\edge|+\secpar\text{ bits}\ ,
$$
where $|\sig_\node|$ and $|\sig_\edge|$ are the signature sizes of the respective schemes, and the $\secpar$ bits come from the binding tag $\rnd$. Storage of the graph, attributes, identifiers, and indexes needed for route selection is additional. Issuance, certificate validation, and route search can scale with $|\vertices|+|\edges|$.
For the BBS signature $(A,e)$ described in the supplementary material, compressed BLS12-381 encodings give $48+32=80$ bytes per signature, or $160(|\vertices|+|\edges|)$ bytes of signatures in this undirected-graph transformation. This estimate covers signatures only, with message vectors, graph data, encoding overhead, and the binding tag adding further storage.

\subsection{Presentation Size}
In the following we estimate the overall proof size in our construction. A padded route uses $\ell$ node proofs and $\ell-1$ edge proofs, each with a genuine-edge/padding OR proof:
\[
    |\pi|=\ell|\pi_\node|+(\ell-1)(|\pi_\edge|+|\pi_{\mathsf{OR}}|).
\]
Here the three terms are the node, edge, and additional OR-proof sizes. At fixed $\ell,n,m$ and predicates, this cryptographic size is independent of graph size.

\iffalse
For the BBS core of \cref{s:instantiation}, a signature-possession proof has randomized elements $\bar A,\bar B,D$, two Schnorr commitments, one response per hidden message, and three responses for $r_1,r_3,e$. Assume $50\%$ disclosure of node and edge attributes. A node proof hides $\rnd$, its identifier, and $n/2$ attributes, costing five $\G_1$ elements and $n/2+5$ scalars. Edge attributes stay hidden in BBS because disclosing them there would reveal padding. The OR branches instead prove selective disclosure or $\id^{\node_l}=\id^{\node_{l+1}}$. The edge BBS proof uses five group elements and $m+6$ scalars. A direct OR proof adds $m/2+1$ equality commitments and responses plus two challenges, for $m/2+6$ group elements and $3m/2+9$ scalars per edge position.
\fi

Inspecting the structure of the presentation proof and counting the resulting group elements and scalar responses gives the following costs.
Assuming that $50\%$ of the node and edge attributes are disclosed, we obtain five $\G_1$ elements and $n/2+5$ scalars per node position, and $m/2+6$ $\G_1$ elements and $3m/2+9$ scalars per edge position.

With $48$-byte compressed $\G_1$ elements and $32$-byte $\Z_p$ scalars on BLS12-381, the counts and size at $50\%$ disclosure are
\begin{align*}
    N_{\G_1}&=5\ell+(\ell-1)(m/2+6),\\
    N_{\Z_p}&=\ell(n/2+5)+(\ell-1)(3m/2+9),\\
    |\pi|_{\mathrm{bytes}}&\approx48N_{\G_1}+32N_{\Z_p}.
\end{align*}
The same counting method gives \cref{tab:proof-sizes} at other disclosure rates.

\begin{table*}[t]
    \centering
    \caption{Presentation sizes in KiB for different path lengths, attribute dimensions, and disclosure rates.}
    \label{tab:proof-sizes}
    \small
    \setlength{\tabcolsep}{3.5pt}
    \renewcommand{\arraystretch}{1.05}
    \begin{tabular}{@{}cc@{\hspace{15pt}} rrr r@{\hspace{10pt}} rrr r@{\hspace{10pt}} rrr@{}}
        \toprule
        & &
        \multicolumn{3}{c}{$20\%$ disclosed} &&
        \multicolumn{3}{c}{$50\%$ disclosed} &&
        \multicolumn{3}{c}{$80\%$ disclosed} \\
        \cmidrule(lr){3-5}
        \cmidrule(lr){7-9}
        \cmidrule(lr){11-13}
        & &
        \multicolumn{3}{c}{$n$} &&
        \multicolumn{3}{c}{$n$} &&
        \multicolumn{3}{c}{$n$} \\
        $\ell$ & $m$
        & $10$ & $20$ & $50$ &
        & $10$ & $20$ & $50$ &
        & $10$ & $20$ & $50$ \\
        \midrule
        \multirow{3}{*}{$5$}
        & $10$ & $7.3$ & $8.6$ & $12.3$ & & $7.8$ & $8.6$ & $10.9$ & & $8.3$ & $8.6$ & $9.5$ \\
        & $20$ & $9.2$ & $10.5$ & $14.2$ & & $10.6$ & $11.4$ & $13.7$ & & $12.0$ & $12.3$ & $13.3$ \\
        & $50$ & $14.8$ & $16.1$ & $19.8$ & & $19.0$ & $19.8$ & $22.2$ & & $23.3$ & $23.6$ & $24.5$ \\
        \midrule
        \multirow{3}{*}{$10$}
        & $10$ & $15.7$ & $18.2$ & $25.7$ & & $16.9$ & $18.4$ & $23.1$ & & $18.0$ & $18.7$ & $20.5$ \\
        & $20$ & $19.9$ & $22.4$ & $29.9$ & & $23.2$ & $24.8$ & $29.4$ & & $26.5$ & $27.1$ & $29.0$ \\
        & $50$ & $32.6$ & $35.1$ & $42.6$ & & $42.2$ & $43.7$ & $48.4$ & & $51.8$ & $52.4$ & $54.3$ \\
        \midrule
        \multirow{3}{*}{$50$}
        & $10$ & $82.6$ & $95.1$ & $132.6$ & & $89.4$ & $97.2$ & $120.6$ & & $96.2$ & $99.3$ & $108.7$ \\
        & $20$ & $105.5$ & $118.0$ & $155.5$ & & $123.8$ & $131.6$ & $155.1$ & & $142.1$ & $145.2$ & $154.6$ \\
        & $50$ & $174.4$ & $186.9$ & $224.4$ & & $227.2$ & $235.0$ & $258.4$ & & $279.9$ & $283.0$ & $292.4$ \\
        \bottomrule
    \end{tabular}
\end{table*}

Looking at the numbers, different trends become visible depending on the relative numbers of node and edge attributes.
If $m$ is small compared with $n$, increasing the disclosure rate tends to reduce the overall presentation size.
As $m$ increases, this trend is reversed, and higher disclosure rates can instead lead to larger presentations.
The effect also depends slightly on the path length $\ell$, since a path contains $\ell$ node proofs but only $\ell-1$ edge proofs.
As $\ell$ increases, this asymmetry becomes less relevant and the observed behavior stabilizes.

A closer look at the dependence of the proof size on the proof goal explains these trends.
For node proofs, every disclosed attribute becomes public and therefore removes one response from the underlying BBS proof.
Consequently, a higher disclosure rate directly reduces the size of the node proof, as is common in anonymous credential systems.
For edge proofs, however, attributes cannot simply be disclosed as public BBS messages, since this would distinguish genuine edges from padding loops.
Instead, selective disclosure has to be established inside the corresponding branch of the OR proof.
Hence, every additionally disclosed edge attribute introduces additional proof material.
The overall effect of a higher disclosure rate is therefore determined by the trade-off between the savings in the node proofs and the additional cost of the edge-policy proofs.

Importantly, the resulting presentation sizes are independent of the size of the underlying QKD network.
This contrasts with previous approaches such as~\cite{qcnc2026,thesis-margherita}, where the proof size grows with the size of the certified graph.
This property is particularly beneficial for large deployments, since adding further nodes or links does not increase the communication cost of proving properties of a path with fixed length and attribute dimensions.

Even for the largest parameter setting considered, with $\ell=m=n=50$, the resulting presentation size remains below $300$\,KiB.
To put this into perspective, assuming QKD links of up to approximately $100$\,km, $\ell=50$ node positions correspond to an end-to-end span of almost $5'000$\,km.
This exceeds the geographical distance from Washington, D.C.\ to any state capital in the contiguous United States, which is approximately $3{,}800$\,km for the most distant capitals on the West Coast.
Similarly, within continental Europe, the distance from Brussels to even the most distant EU capitals is around $2{,}100$\,km, with Athens representing one of the extreme cases.
Thus, evaluated path lengths exceed plausible continent-scale distances, while the presentation sizes remain practical for real-world use.

\subsection{Runtime}\label{s:runtime}
We benchmarked our Rust implementation on Ubuntu~24.04.4 LTS (Linux~7.0.0-31-generic), an Intel Core Ultra 7 265U (up to $5.4$\,GHz), and $32$\,GB RAM. Both experiments selectively disclose $50\%$ of node and edge attributes.\footnote{The implementation can be accessed at \url{https://github.com/alecolo129/qkd-topology-cert}.}

At $m=n=8$, increasing padded length $\ell$ from $2$ to $64$ raises generation from $7.2$ to $353$\,ms and verification from $8.5$ to $275$\,ms (\cref{fig:runtime-path-length}). Growth is approximately linear because each position adds constant proof work at fixed attribute counts.

\begin{figure}[t]
    \centering
    \begin{minipage}[t]{0.49\linewidth}
        \centering
        \includegraphics[width=0.95\linewidth]{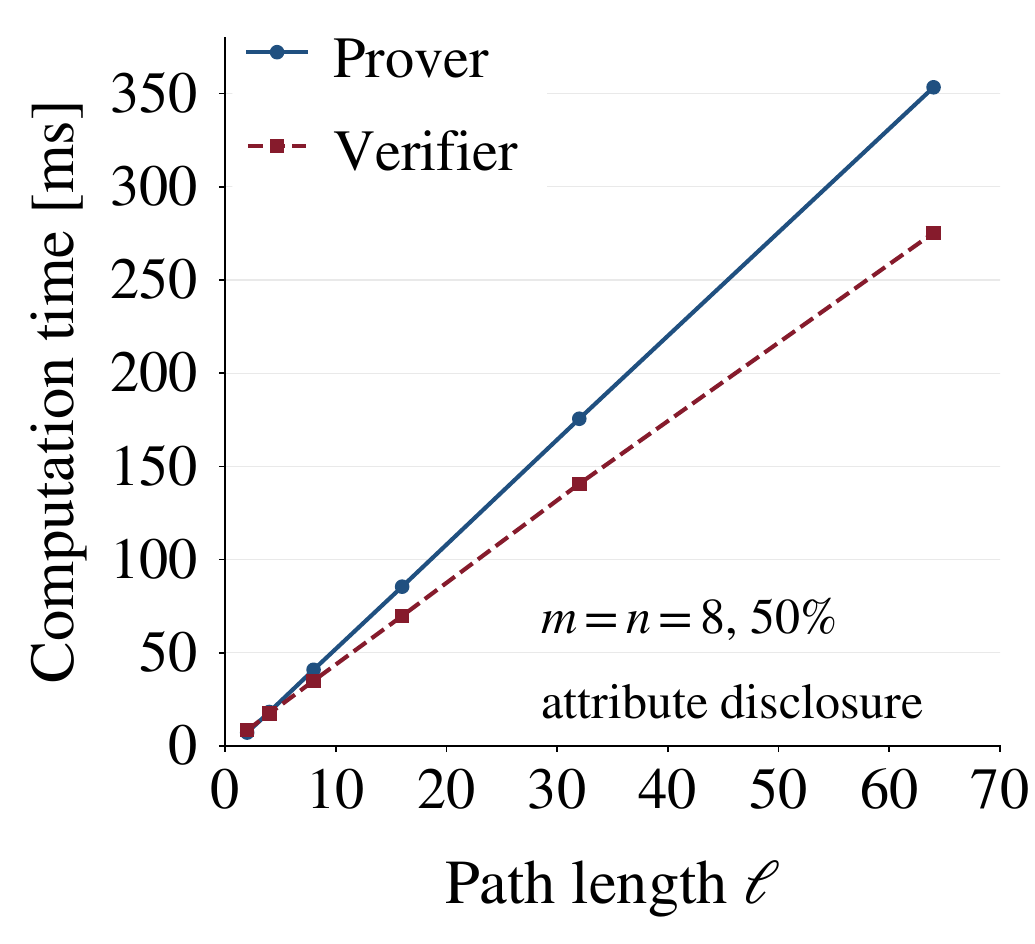}
        \caption{Runtime versus path length $\ell$ with $m=n=8$ and $50\%$ disclosure.}
        \label{fig:runtime-path-length}
    \end{minipage}\hfill
    \begin{minipage}[t]{0.49\linewidth}
        \centering
        \includegraphics[width=0.95\linewidth]{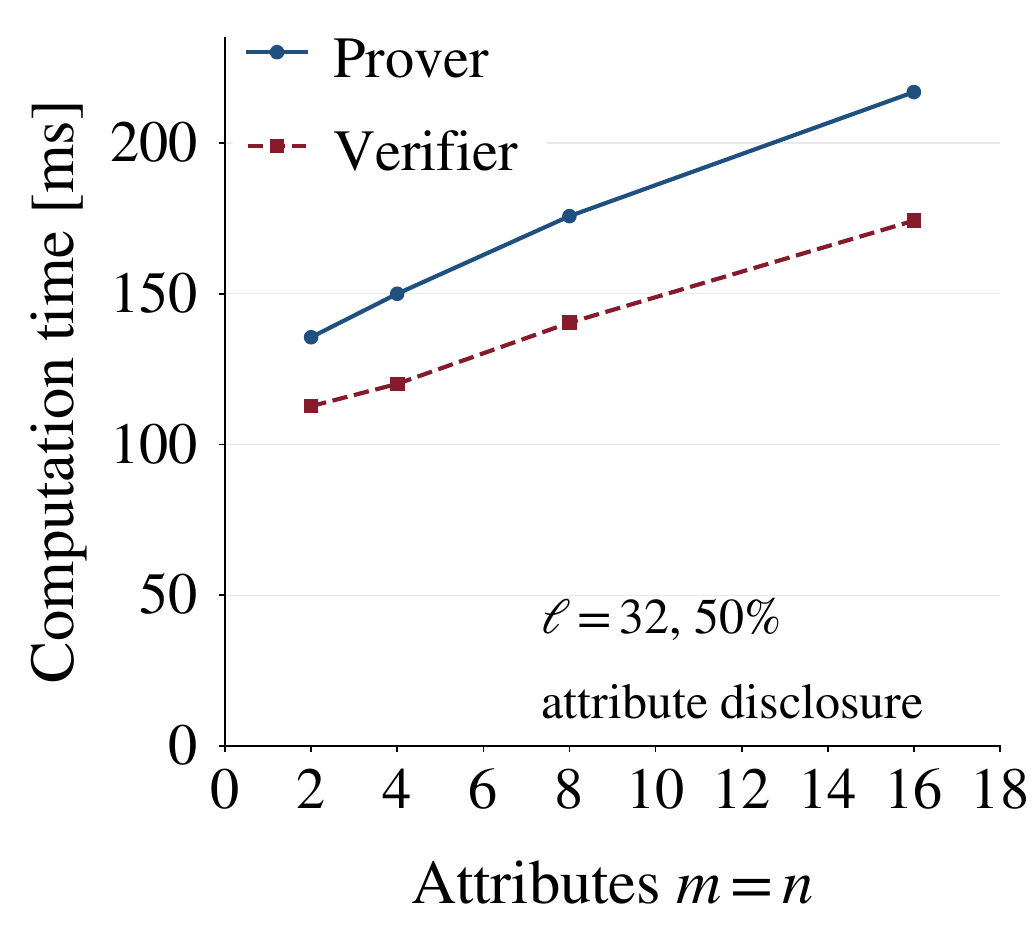}
        \caption{Runtime versus attributes $m=n$ with $\ell=32$ and $50\%$ disclosure.}
        \label{fig:runtime-attributes}
    \end{minipage}
\end{figure}

At $\ell=32$, increasing $m=n$ from $2$ to $16$ raises generation from $136$ to $217$\,ms and verification from $113$ to $174$\,ms (\cref{fig:runtime-attributes}), consistent with roughly linear hidden-attribute proof work. These experiments vary the padded path length and the number of attributes.
Varying the overall graph size is unnecessary, since the presentation proofs are independent of nodes and edges that do not occur on the presented path.

%The supplied materials have mean curves but no raw samples, confidence intervals, benchmark configuration, implementation commit, or matched baseline. The numerical comparison remains descriptive pending reproducible, matched experiments.

\section{Conclusions}\label{s:conclusion}

We presented a topology-certification mechanism for QKD networks that allows providers to prove the existence of policy-compliant routes without revealing the routes in an individual presentation.
By certifying nodes and edges independently, presentation size and cryptographic proof work for an already selected route depend on that route rather than on the size of the complete network. Certificate storage and route discovery still depend on the graph, and updates require explicit epoch management.
The construction further supports hidden path lengths and extensions to node-disjoint multi-path proofs and non-connectivity proofs.
The BBS/Schnorr proof core has promising communication and computation costs for the studied parameter settings. 

An important direction for future work is the development of an efficient post-quantum instantiation of our construction.
A further step is end-to-end integration with practical QKD network management and path selection, including an independently enforced binding between the proved existence statement and the route actually used for key material.

\paragraph{Acknowledgements}
This work has received funding from the European Union's Horizon Europe research and innovation programme under Grant Agreement No.~101114043 ({\sc qsnp}), and from the Connecting Europe Facility (CEF Digital) under Grant Agreement No.~101305103 ({\sc QuarterNext}), with national co-funding provided by the Austrian Research Promotion Agency (FFG).

Views and opinions expressed are, however, those of the authors only and do not necessarily reflect those of the European Union or the respective granting authorities. Neither the European Union nor the granting authorities can be held responsible for them.

\bibliographystyle{splncs04}
\bibliography{references}

%\clearpage
\appendix
%\section*{Supplementary Material}
%The following material formalizes the cryptographic building blocks and gives full generic proofs, concrete instantiation details, and extension details.
\section{Formal Proof-System Security Games}\label{s:supplement_proof_games}

This section gives the full versions of the five definitions in \cref{s:building_blocks}, including the algorithms, correctness conditions, and security games. We use the notation of \cref{s:notation}.

\subsection{Cryptographic Building Blocks}\label{s:supp-building-blocks}
In the following we introduce the cryptographic building blocks that will be used in our main constructions.

\subsubsection{Multi-message signatures.}
Intuitively, a multi-message signature scheme allows a signer to authenticate a vector of messages such that a verifier can later check that the signature was issued by the signer on exactly these messages.

\begin{Definition}[Multi-message signature scheme]\label{def:supp-multi-message-signatures}
A multi-message signature scheme consists of a tuple of \PPT algorithms $(\ParGen,\KGen, \Sign, \Verify)$, where:
\begin{description}
\item[Parameter Generation.]
$\ParGen(1^\secpar,L)$ takes as input the security parameter $1^\secpar$ and the supported length $L$ of messages, and outputs public parameters $\pp$.

\item[Key Generation.]
$\KGen(\pp)$ takes as input the public parameters $\pp$ and outputs a signing key $\sk$ and a verification key $\pk$:
$$
  (\sk,\pk) \gets \KGen(\pp)\,.
$$

\item[Sign.]
$\Sign(\sk, \vec{m})$ takes as input the signing key $\sk$ and a message vector $\vec{m}=(m_1,\ldots,m_L)$ with messages $m_i$ from the message space $\mathcal{M}$, and outputs a signature $\sig$:
$$
  \sig \gets \Sign(\sk,\vec{m})\,.
$$

\item[Verify.]
$\Verify(\pk, \vec{m}, \sig)$ takes as input the verification key $\pk$, a message vector $\vec{m}=(m_1,\ldots,m_L)$, and a signature $\sig$, and outputs $1$ if $\sig$ is a valid signature on $\vec{m}$ under $\pk$, and $0$ otherwise.
\end{description}
\end{Definition}
For notational convenience, while being input to all algorithms, we will not make the public parameters explicit when clear from the context.

For correctness, we require that honestly generated signatures always verify correctly.
That is, for all $\secpar,L \in \mathbb{N}$ and all message vectors $\vec{m} \in \mathcal{M}^L$, we have that:
\[
    \Pr\left[\Verify(\pk,\vec{m},\sig)=1
    :
    \begin{array}{c}
    \pp\rexec\ParGen(1^\secpar,L);
    (\sk,\pk)\rexec\KGen(\pp);\\
    \sig\rexec\Sign(\sk,\vec{m})
    \end{array}
    \right] = 1\,.
\]

A signature scheme is unforgeable if no efficient adversary can produce a valid signature on a message vector that was not signed before.
We use the standard notion of existential unforgeability under adaptive chosen-message attacks (EUF-CMA), first introduced by \cite{DBLP:journals/siamcomp/GoldwasserMR88}.

\begin{Definition}[Existential unforgeability under chosen-message attacks]\label{def:supp-euf-cma}
For each supported vector length $L$, a multi-message signature scheme $(\ParGen,\KGen,\Sign,\Verify)$ is \emph{existentially unforgeable under chosen-message attacks (EUF-CMA)} if for every \PPT adversary $\adv$ there exists a negligible function $\negl$ such that
\[
\Pr\left[
\begin{array}{c}
\Verify(\pk,\vec{m}^*,\sig^*)=1 \;\land\; \\
\vec{m}^* \notin Q
\end{array}
:
\begin{array}{c}
\pp\rexec\ParGen(1^\secpar,L); \\
(\sk,\pk)\rexec\KGen(\pp); \\
(\vec{m}^*,\sig^*)\rexec \adv^{\Sign(\sk,\cdot)}(\pp,\pk)
\end{array}
\right]
\leq \negl(\secpar).
\]
Here, $Q$ denotes the set of message vectors queried by $\adv$ to its signing oracle.
\end{Definition}

\subsubsection{Non-Interactive Zero-Knowledge Proofs}
Intuitively, a non-interactive zero-knowledge proof system allows a prover to convince a verifier that a statement belongs to a language without revealing any information about the corresponding witness.
Weak simulation extractability additionally guarantees that, even after seeing arbitrarily many simulated proofs, no efficient adversary can produce a valid proof for a fresh statement without knowing a corresponding witness.

Let $\mathcal{R}$ be a binary relation, and let
$$
  \mathcal{L}_{\mathcal{R}} := \left\{x \,\middle|\, \exists w : \mathcal{R}(x,w)=1 \right\}
$$
be the corresponding language.
For a path presentation, $x$ includes the complete public request, including the context $\ctx$.

\begin{Definition}[Non-interactive zero-knowledge proof system]\label{def:supp-nizk}
A non-interactive zero-knowledge proof system for a relation $\mathcal{R}$ consists of a tuple of \PPT algorithms
\[
  (\ParGen,\NIZK,\Verify),
\]
where:
\begin{description}
\item[Parameter Generation.]
$\ParGen(1^\secpar)$ takes as input the security parameter $1^\secpar$ and outputs public parameters $\pp$.

\item[Prove.]
$\NIZK(\pp,x,w)$ takes as input the public parameters $\pp$, a statement $x$, and a witness $w$ such that $\mathcal{R}(x,w)=1$, and outputs a proof $\pi$:
\[
  \pi \gets \NIZK(\pp,x,w)\,.
\]

\item[Verify.]
$\Verify(\pp,x,\pi)$ takes as input the parameters $\pp$, a statement $x$, and a proof $\pi$, and outputs $1$ if $\pi$ is a valid proof for $x$, and $0$ otherwise.
\end{description}
\end{Definition}

For notational convenience, we sometimes omit the public parameters when clear from context.

For correctness, we require that honestly generated proofs always verify correctly.
That is, for all $\secpar\in\mathbb{N}$ and all $(x,w)$ such that $\mathcal{R}(x,w)=1$, we have that
\[
  \Pr\left[
    \Verify(\pp,x,\pi)=1
    :
    \begin{array}{c}
      \pp\rexec\ParGen(1^\secpar);\\
      \pi\rexec\NIZK(\pp,x,w)
    \end{array}
  \right]
  =1\,.
\]

A non-interactive proof system is zero-knowledge if there exists a simulator that can generate public parameters together with a simulation trapdoor and subsequently simulate proofs without knowing the corresponding witnesses.
We use the standard notion of zero knowledge---originally defined by \cite{DBLP:conf/stoc/BlumFM88}---following the notation of~\cite{DBLP:conf/cans/BobolzEKRS21}.

\begin{Definition}[Zero knowledge]\label{def:supp-zero-knowledge}
A non-interactive proof system $(\ParGen,\NIZK,\Verify)$ for a relation $\mathcal{R}$ is \emph{zero-knowledge} if there exist \PPT algorithms $(\Sim_1,\Sim_2)$ such that for every \PPT adversary $\adv$ there exists a negligible function $\negl$ satisfying
\[
  \left|
    \Pr\left[
      b'=1
      :
      \begin{array}{c}
        \pp\rexec\ParGen(1^\secpar);\\
        (x,w,\state)\gets\adv(\pp);\\
        \pi\rexec\NIZK(\pp,x,w);\\
        b'\gets\adv(\state,\pi)
      \end{array}
    \right]
    -
    \Pr\left[
      b'=1
      :
      \begin{array}{c}
        (\pp,\tau_{\mathsf{sim}})\rexec\Sim_1(1^\secpar);\\
        (x,w,\state)\gets\adv(\pp);\\
        \pi\rexec\Sim_2(\tau_{\mathsf{sim}},x);\\
        b'\gets\adv(\state,\pi)
      \end{array}
    \right]
  \right|
  \leq
  \negl(\secpar),
\]
where the adversary is restricted to output $(x,w)$ such that $\mathcal{R}(x,w)=1$.
\end{Definition}

Following the weak simulation-extractability framework of Faust et al.~\cite{DBLP:conf/indocrypt/FaustKMV12}, we define the following variant which forces the adversary's final statement to be fresh.

\begin{Definition}[Fresh-statement weak simulation extractability]
\label{def:supp-wse}
A non-interactive proof system $(\ParGen,\allowbreak\NIZK,\allowbreak\Verify)$ for a relation $\mathcal{R}$ in the random-oracle model is \emph{fresh-statement weakly simulation extractable} if, for every \PPT adversary $\adv$, there exists an efficient extractor $\Ext^{\adv}$ that may rewind $\adv$ and has access to its random tape as well as to the random-oracle and proof-simulation transcripts.

Let $Q$ denote the set of statements submitted by $\adv$ to the proof-simulation oracle.
Define
$$
    \mathsf{acc}
    :=
    \Pr\left[
        \begin{array}{c}
            \Verify(\pp,x^*,\pi^*)=1
            \;\land\;
            x^*\notin Q
        \end{array}
        :
        \begin{array}{c}
            (\pp,\tau_{\mathsf{sim}})
            \rexec
            \Sim_1(1^\secpar);
            \\
            (x^*,\pi^*)
            \gets
            \adv^{\Sim_2(\tau_{\mathsf{sim}},\cdot),H}(\pp;\rho)
        \end{array}
    \right]
$$
and
$$
    \mathsf{ext}
    :=
    \Pr\left[
        \begin{array}{c}
            \mathcal{R}(x^*,w^*)=1
            \;\land\;
            x^*\notin Q
        \end{array}
        :
        \begin{array}{c}
            (\pp,\tau_{\mathsf{sim}})
            \rexec
            \Sim_1(1^\secpar);
            \\
            (x^*,\pi^*)
            \gets
            \adv^{\Sim_2(\tau_{\mathsf{sim}},\cdot),H}(\pp;\rho);
            \\
            w^*
            \gets
            \Ext^{\adv}(x^*,\pi^*;\rho,T_H,T)
        \end{array}
    \right],
$$
where $T_H$ and $T$ denote the random-oracle and proof-simulation transcripts, respectively.
There exist a constant $d>0$, a polynomial $p$, and a negligible extraction error $\nu$ such that, whenever $\mathsf{acc}\geq\nu$,
$$
    \mathsf{ext}
    \geq
    \frac{1}{p(\secpar)}
    \bigl(\mathsf{acc}-\nu\bigr)^d.
$$
\end{Definition}
In our unforgeability experiment, the final statement $x^*$ must be fresh with respect to all queries to the presentation oracle.
Since these queries are answered by the proof simulator in Game~1, this gives exactly $x^*\notin Q$, as required by the above definition.
\section{Full Proofs of the Generic Construction}\label{s:supplement_generic_proofs}
This section proves the generic theorems stated in \cref{s:generic_proof}. The concrete BBS/Schnorr proof core described later requires a separate proof that its full composition meets the assumed extractability property.

\subsection{Unforgeability}\label{s:supplement_unforgeability_proof}
We prove \cref{thm:unforgeability} against the experiment in \cref{fig:unforgeability}.
\begin{proof}
We define a series of games.

\textbf{Game 0.}
This is the original unforgeability experiment from \cref{fig:unforgeability}.

\textbf{Game 1.}
We modify the generation of the proof-system parameters.
Instead of computing $\pp_{\ZKP} \rexec \ParGen_{\ZKP}(1^\secpar)$, we compute $(\pp_{\ZKP},\tau_{\mathsf{sim}}) \rexec \Sim_1(1^\secpar)$, and use the simulated parameters $\pp_{\ZKP}$ as the public parameters of the
proof system.

Furthermore, all answers to presentation queries are generated using the simulation trapdoor. 
More precisely, on a query
$$
  (\ctr,\phi_\node,\phi_\edge, \id^{\node_\source},\id^{\node_\dest},\ell,\ctx),
$$
the oracle first performs the same validity checks as in the original
experiment and then computes
$$
  \pi
  \rexec
  \Sim_2\left( \tau_{\mathsf{sim}}, \ipk, \id_\GG, \phi_\node, \phi_\edge, \id^{\node_\source}, \id^{\node_\dest}, \ell, \ctx \right).
$$

By the assumed zero-knowledge property, Game~0 and Game~1 are computationally indistinguishable.

\textbf{Game 2.}
We abort if two distinct certification queries are assigned the same binding
tag.
Let $q_{\cert}$ denote the number of certification queries made by the
adversary.
Since every binding tag is sampled independently and uniformly from
$\{0,1\}^{\secpar}$, the probability of an abort is bounded by
$$
  \Pr[\mathsf{abort}] \leq \frac{q_{\cert}^2}{2^\secpar}.
$$
Since $q_\cert$ is polynomially bounded by the runtime of $\adv$, Game~1 and Game~2 differ only with negligible probability.

\textbf{Game 3.}
Assume that the adversary wins Game~2 with non-negligible probability.
Consider the adversary together with the simulated presentation oracle and the random oracle as an adversary against fresh-statement weak simulation extractability.
By the winning condition of \cref{fig:unforgeability}, the public statement
$$
    x^* :=
    \bigl(
        \ipk,
        \id_\GG^*,
        \phi_\node^*,
        \phi_\edge^*,
        \id^{\node_S^*},
        \id^{\node_D^*},
        \ell^*,
        \ctx^*
    \bigr)
$$
was not previously submitted to the presentation oracle.
Since all presentation-oracle replies in Game~1 are generated by the proof simulator, we have $x^*\notin Q$, where $Q$ is the set of statements submitted to the proof-simulation oracle.
Fresh-statement weak simulation extractability therefore gives an efficient, possibly rewinding extractor that, whenever the adversary produces an accepting proof $\pi^*$ for $x^*$ with non-negligible probability, extracts with non-negligible probability a witness $w^*$ satisfying $\mathcal{R}(x^*,w^*)=1$.
The reduction fixes its internal randomness across rewinding executions and answers repeated signing queries consistently by reusing previously obtained signing-oracle responses.
In particular, the extracted witness contains a binding tag $\rnd^*$, signatures and attributes for all nodes and edges of the claimed path, and the corresponding node identifiers, such that all signature verifications and policy checks in the proof relation are satisfied.

\medskip

It remains to show that a successful adversary in Game~3 yields an EUF-CMA forgery against one of the two deployed multi-message signature schemes.
Consider therefore the binding tag $\rnd^*$ contained in the extracted witness, which was used in at most one certification query, since Game~2 did not abort.

We now distinguish the following cases:

\begin{description}
    \item[Fresh $\rnd^*$.] 
      Suppose that no certification query was assigned the tag $\rnd^*$.
      Then every extracted node signature is valid for a message vector that was never signed during certification.
      In particular, any extracted node signature constitutes an EUF-CMA forgery against the node-signature scheme.
    \item[Uncorrupted $\rnd^*$.]
      Next, suppose that $\rnd^*$ belongs to a certification query with identifier $\ctr^*$, but that $\ctr^*\notin\Qcorr$. 
      In the reduction we now perform certificate generation lazily:
      on a certification query, the reduction samples and records the binding tag and all graph data, but generates the individual signatures only if the
corresponding certificate is later corrupted.
     Until corruption, all presentation queries for uncorrupted certificates are answered using the proof simulator and therefore do not require the underlying signatures.
     Consequently, no signature containing the tag $\rnd^*$ has been requested from the signing oracle.
     Thus, any node signature extracted from $\pi^*$ again constitutes an EUF-CMA forgery against the node-signature scheme.
    \item[Corrupted $\rnd^*$.]
      Finally, suppose that $\rnd^*$ belongs to a corrupted certificate with identifier $\ctr^*\in\Qcorr$.
      By the winning condition of \cref{fig:unforgeability}, the graph and attributes associated with this certificate now cannot satisfy
      $$
        \ValidPath\bigl( \GG, \{\attr^\node\}, \{\attr^\edge\}, \phi_\node^*, \phi_\edge^*, \id^{\node_\source^*}, \id^{\node_\dest^*}, \ell^* \bigr)=1.
      $$
      Assume that all signatures contained in the extracted witness were signatures issued as part of this corrupted certificate.
      Because the binding tag, graph identifier, node identifiers, edge endpoints, and attributes are included in the signed message vectors, the extracted signatures then authenticate nodes and edges of the certified augmented graph. Uniqueness of node identifiers within that graph makes equality of adjacent signed endpoint identifiers imply equality of the actual graph vertices.
      Moreover, the proof relation guarantees that these edges form a path from $\id^{\node_\source^*}$ to $\id^{\node_\dest^*}$ and that all non-padding nodes and edges satisfy the respective policies.
      Removing the padding loops from this route yields a compliant route in the original graph with at most $\ell^*$ node positions (and at most $\ell^*-1$ genuine edges), contradicting the winning condition of the unforgeability experiment.

      Thus, at least one signature contained in the extracted witness is valid for a node or edge message vector that was not signed as part of the corrupted certificate.
      Since binding tags do not collide in Game~2, this message vector was not signed as part of any other certificate either.
      Hence, this signature is an EUF-CMA forgery against either the node-signature scheme or the edge-signature scheme.

\end{description}

A reduction now guesses whether the resulting forgery is a node or an edge signature and embeds its EUF-CMA challenge public key into the corresponding issuer public key.
It generates the other signing key honestly, answers corruption queries using its signing oracle, answers presentation queries using the proof simulator, and outputs the extracted fresh signature.
The guess incurs only a multiplicative constant loss of $\nicefrac{1}{2}$.

We conclude that a non-negligible probability of winning Game~3 implies either a violation of fresh-statement weak simulation extractability or a non-negligible EUF-CMA advantage against one of the deployed multi-message signature schemes.
Together with the negligible differences introduced in Games~1 and~2, the claim of the theorem follows.\qed
\end{proof}

\subsection{Graph Hiding}\label{s:supplement_graph_hiding_proof}
We prove \cref{thm:graph-hiding} against the experiment in \cref{fig:graph_hiding}.
\begin{proof}
  We define a series of games to prove the theorem.

  \textbf{Game 0.}
  This is the original graph hiding experiment as in \cref{def:graph_hiding}.

  \textbf{Game 1.}
  We replace the proof-system setup with
  \[
    (\pp_\ZKP',\tau_{\mathsf{sim}})\rexec\Sim_1(1^\secpar)
  \]
  and publish the simulated parameters $\pp_\ZKP'$ in the system parameters.

  By the zero-knowledge property of the proof system, Game 0 and Game 1 are computationally indistinguishable.

  \textbf{Game 2.}
  We compute the challenge proof with the simulator, using its simulation trapdoor and the full public statement:
  \[
    \pi'\rexec\Sim_2\!\left(\tau_{\mathsf{sim}},
      (\ipk,\id_\GG,\policy_\node,\policy_\edge,
       \id^{\node_\source},\id^{\node_\dest},\ell,\ctx)\right).
  \]

  Again by the zero-knowledge property of the proof system, Game 1 and Game 2 are computationally indistinguishable.

  \medskip

  The distribution of $\pi'$ now depends only on the public statement
  \[
    \big(\ipk,\id_\GG,\policy_\node,\policy_\edge,
       \id^{\node_\source},\id^{\node_\dest},\ell,\ctx\big),
  \]
  which is identical for both challenge instances.
  In particular, $\pi'$ is independent of the challenge bit $b$ and of the certificate, graph, signatures, binding tag, and node and edge attributes selected by $b$.
  Consequently, the adversary guesses $b$ correctly with probability exactly $1/2$.

  Now, since all games are computationally indistinguishable, the claim follows.\qed
\end{proof}

\section{Full Concrete Instantiation}\label{sec:supplement-instantiation}

In order to efficiently instantiate our generic construction, we require signature schemes and zero-knowledge proofs of knowledge which are compatible and support for flexible types of predicates.
Specifically, our construction can thus be instantiated using the following primitives:
\begin{itemize}
    \item BBS multi-message signatures,
    \item Schnorr-based zero-knowledge proofs of knowledge using the Fiat-Shamir heuristic, and
    \item efficient OR proofs for Schnorr.
\end{itemize}

In the following we briefly recap these primitives and specify the assumptions under which they are secure.

\subsection{BBS Signatures}

We briefly recall the BBS multi-message signature scheme.
It was first introduced in \cite{DBLP:conf/crypto/BonehBS04}, with the security proof used in our construction presented by \cite{DBLP:conf/eurocrypt/TessaroZ23a}.

Let
$$
  \e : \GG_1 \times \GG_2 \to \GG_T
$$
be an efficiently computable, non-degenerate bilinear map, where
$\GG_1$, $\GG_2$, and $\GG_T$ are cyclic groups of prime order $p$.

The BBS signature scheme consists of the following algorithms.

\begin{description}
\item[Parameter Generation.]
$\ParGen(1^\secpar,L)$ generates bilinear groups $(\GG_1,\GG_2,\GG_T,p,\e)$
of prime order $p$, samples generators
$$
  g_1\rexec\GG_1,
  \qquad
  g_2\rexec\GG_2,
  \qquad
  h_1,\ldots,h_L\rexec\GG_1,
$$
and outputs
$$
  \pp   \gets \left( \GG_1,\GG_2,\GG_T,p,\e, g_1,g_2,h_1,\ldots,h_L \right).
$$
The message space is $\mathcal{M}=\Z_p$.

\item[Key Generation.]
$\KGen(\pp)$ samples $x\rexec\Z_p^*$ and computes
$$
    X_1 \gets g_1^x
    \qquad\text{and}\qquad
    X_2 \gets g_2^x.
$$
It sets $\isk\gets x$ and $\ipk\gets(X_1,X_2)$.

Note that publishing $X_1=g_1^x$ in addition to $X_2=g_2^x$ does not require a stronger assumption for the unforgeability of BBS.
Indeed, the $q$-SDH instance of \cref{def:q-sdh} already contains both $g_1^x$ and $g_2^x$, so the standard reduction can expose both components as part of the verification key.
The additional component $X_1$ is used only to enable simulation of the randomized presentation proof.
\item[Sign.]
To sign a message $\vec{m}=(m_1,\ldots,m_L)\in\Z_p^L$, the signer samples $e\rexec\Z_p\setminus\{-\isk\}$ and computes
$$
  B \gets g_1\prod_{i=1}^{L}h_i^{m_i} \quad\text{and}\quad  A \gets B^{\frac{1}{\isk+e}}.
$$
The signature is $\sig\gets(A,e)$.

\item[Verify.]
On input a verification key $w$, a message vector
$\vec{m}=(m_1,\ldots,m_L)$, and a signature $\sig=(A,e)$, the verifier first checks that $A\in\GG_1\setminus\{1_{\GG_1}\}$ and $e\in\Z_p$.
It then computes
$$
  B
  \gets
  g_1\prod_{i=1}^{L}h_i^{m_i}
$$
and outputs $1$ if and only if, for $\ipk=(X_1,X_2)$, it holds that
$$
      \e(X_1,g_2)=\e(g_1,X_2) \quad\text{and}\quad \e\left(A,X_2 \cdot g_2^e\right) = \e(B,g_2).
$$
\end{description}

We next define the complexity assumption necessary for stating the security of BBS signatures.
\begin{Definition}[$q$-strong Diffie-Hellman assumption ($q$-SDH)]\label{def:q-sdh}
Let $\GG_1$ and $\GG_2$ be cyclic groups of prime order $p$, let $\e:\GG_1\times\GG_2\rightarrow\GG_T$ be a bilinear map, and let $\langle g_1\rangle=\GG_1$ and $\langle g_2\rangle=\GG_2$.

For $q\in\mathbb{N}$, the \emph{$q$-strong Diffie-Hellman ($q$-SDH) problem} is defined as follows.
Given a tuple $\left(g_1, g_1^x, g_1^{x^2}, \allowbreak\ldots, g_1^{x^q}, g_2, g_2^x \right)$, where $x\rexec\Z_p$, compute a $\left(c,g_1^{\frac{1}{x+c}}\right)\in(\Z_p\setminus\{-x\})\times\GG_1$.

The \emph{$q$-SDH assumption} holds in $(\GG_1,\GG_2)$ if, for every \PPT adversary $\adv$, there exists a negligible function $\negl$ such that
\[
  \Pr\left[
    \begin{array}{c}
      c\in\Z_p,\quad x+c\neq 0,\quad
      A=g_1^{\frac{1}{x+c}}
    \end{array}
    :
    \begin{array}{c}
      x\rexec\Z_p;\\
      (c,A)\gets
      \adv\left(
        g_1,
        g_1^x,
        g_1^{x^2},
        \ldots,
        g_1^{x^q},
        g_2,
        g_2^x
      \right)
    \end{array}
  \right]
  \leq
  \negl(\secpar).
\]
\end{Definition}

Now, the following lemma regarding the security of BBS signatures holds:
\begin{Lemma}
The BBS multi-message signature scheme described above is EUF-CMA secure in the sense of \cref{def:euf-cma} under the $q$-strong Diffie-Hellman assumption in the underlying bilinear groups.
\end{Lemma}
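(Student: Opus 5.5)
The proof reduces EUF-CMA forgeries of BBS to the $q$-SDH problem, following Boneh--Boyen-style ``partition'' simulation in the form used by Tessaro and Zhu (\cite{DBLP:conf/eurocrypt/TessaroZ23a}). Let $\adv$ make at most $q$ signing queries. The reduction receives $(g_1,g_1^x,\ldots,g_1^{x^{q+1}},g_2,g_2^x)$ and sets $X_1=g_1^x$, $X_2=g_2^x$; since both lie in the instance, the extra key component costs nothing.

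\textbf{Simulating the signing oracle.} The reduction pre-samples $e_1,\ldots,e_q\rexec\Z_p$ and sets $f(X)=\prod_{j}(X+e_j)$. It defines $g_1':=g_1^{f(x)}$, which is computable from the powers, and uses $g_1'$ as the ``$g_1$'' of the scheme. For each $h_i$ it samples $\alpha_i\rexec\Z_p$ and sets $h_i=g_1'^{\alpha_i}$.

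Under this embedding a query $\vec m_j$ has $B_j=g_1'^{1+\sum_i\alpha_i m_{j,i}}$. The answer $A_j=B_j^{1/(x+e_j)}$ then equals $g_1$ raised to $(1+\sum_i\alpha_im_{j,i})f(x)/(x+e_j)$, a polynomial in $x$ of degree below $q$. It is therefore computable, and the reduction returns $(A_j,e_j)$. The public parameters and signatures are distributed exactly as in the real scheme, since $x+e_j\neq0$ except with negligible probability, in which case the reduction has already found $x$.

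\textbf{Extracting from a forgery.} The forgery is $(A^*,e^*)$ on $\vec m^*\notin Q$ with $A^{*(x+e^*)}=g_1'^{\,1+\sum\alpha_i m_i^*}$. Write $s^*=1+\sum_i\alpha_im_i^*$ and $s_j$ analogously. There are two cases.

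\emph{Case 1: $e^*\notin\{e_j\}$.} If $s^*\neq0$, divide $f(X)$ by $X+e^*$ to get quotient $\gamma(X)$ and nonzero remainder $\gamma_{-1}$. Then
\[
g_1^{1/(x+e^*)}=\bigl(A^{*1/s^*}\,g_1^{-\gamma(x)}\bigr)^{1/\gamma_{-1}},
\]
which is a $q$-SDH solution. The value $s^*$ is zero only with probability $1/p$ over the hidden $\alpha_i$, because the $\alpha_i$ are information-theoretically hidden given $h_i$ once $g_1'$ is fixed.

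\emph{Case 2: $e^*=e_j$ for some $j$.} Then $A^*=g_1'^{s^*/(x+e_j)}$ and $A_j=g_1'^{s_j/(x+e_j)}$. Since $\vec m^*\neq\vec m_j$, we have $s^*\neq s_j$ unless $\sum_i\alpha_i(m_i^*-m_{j,i})=0$.

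This last event is the main obstacle, because the $\alpha_i$ must stay hidden throughout the interaction and not just at the end. I would handle it as Tessaro--Zhu do: show that the adversary's view is independent of the $\alpha_i$ beyond the $h_i$. That view consists of the $h_i$ together with the signatures. Each $A_j$ is determined by $B_j$, which is a function of the public $h_i$ and the message, so it leaks nothing further. Hence the bad event has probability at most $q/p$ by a Schwartz--Zippel argument over the hidden $\alpha_i$.

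\textbf{Completing Case 2.} If $s^*\neq s_j$, then $A^{*s_j}A_j^{-s^*}$ collapses, so this route needs a different decomposition. Here I would switch to Tessaro--Zhu's alternative embedding, which sets $g_1=g_1^{f(x)}$ and $h_1$ with a component depending on $x+e$. Concretely, the reduction guesses $j$ with loss $1/q$ and programs $h_i$ using $g_1^{f_j(x)}$, where $f_j=f/(X+e_j)$, so that a forgery under $e_j$ with a different message exposes $g_1^{1/(x+e_j)}$. It chooses between the two embeddings with a coin flip, losing a factor of two.

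Combining the cases gives $\mathrm{Adv}^{\mathrm{EUF}}\leq 2q\cdot\mathrm{Adv}^{q\text{-SDH}}+O(q/p)$, which proves the lemma.
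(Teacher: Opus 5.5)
The paper does not prove this lemma; it defers to Tessaro--Zhu. Your reconstruction therefore has to stand on its own. Case~1 essentially does, but Case~2 has a genuine gap. First, your claim that the $\alpha_i$ are information-theoretically hidden is false. Both $g_1'$ and $h_i=(g_1')^{\alpha_i}$ are public, so every $\alpha_i$ is uniquely determined by the adversary's view and is only computationally hidden. No Schwartz--Zippel argument over the $\alpha_i$ is available. In Case~1 this is repairable: $s^*=0$ means $B^*=1$, and the verifier's check $A^*\neq 1$ then forces $x=-e^*$, which solves the instance. In Case~2, however, the event $\sum_i\alpha_i(m_i^*-m_{j,i})=0$ is exactly $B^*=B_j$, in which case $(A_j,e_j)$ is itself a valid signature on $\vec m^*$. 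This is a real forgery strategy for any adversary that finds such a representation collision among the $h_i$. In your embedding the reduction knows all $\alpha_i$, so nothing converts such a collision into a $q$-SDH solution, and your $q/p$ bound for it is unjustified.

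More importantly, your alternative Case~2 reduction is unspecified at precisely the hard step: answering the guessed $j$-th signing query. Suppose $e_j$ is fixed in advance and the bases are programmed via $g_1^{f_j(x)}$. Then the exponent of $B_j$ has the form $f_j(x)P_{\vec m_j}(x)$, where $P_{\vec m_j}$ depends on the adaptively chosen $\vec m_j$. The value $B_j^{1/(x+e_j)}$ is computable only if $(X+e_j)\mid P_{\vec m_j}(X)$, which the adversary has no reason to arrange. BBS+ absorbs this with its extra randomizer $s$, but short BBS has no such slack.

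The missing idea is to program $e_j$ instead of the bases. Give the scheme's generator (index $k=0$) and each $h_k$ the exponent $f_j(x)(a_kx+b_k)$ with fresh uniform $a_k,b_k$. Then $B_{\vec m}$ has exponent $f_j(x)(c_{\vec m}x+d_{\vec m})$, with $c_{\vec m}=a_0+\sum_k a_km_k$ and $d_{\vec m}=b_0+\sum_k b_km_k$. Answer the $j$-th query with $e_j:=d_{\vec m_j}/c_{\vec m_j}$ and $A_j:=g_1^{c_{\vec m_j}f_j(x)}$. This has the right distribution for two reasons:
\begin{itemize}
\item The view fixes only the values $a_kx+b_k$, so the $a_k$ stay uniform given it. Hence $x+e_j=(c_{\vec m_j}x+d_{\vec m_j})/c_{\vec m_j}$ has a view-determined numerator and a uniform denominator. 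If the numerator is zero, $x$ is already revealed. Otherwise $e_j$ is distributed exactly as in the real scheme.
\item After $e_j$, and thus $c_{\vec m_j}$, is revealed, $c_{\vec m^*}$ is still uniform for every $\vec m^*\neq\vec m_j$.
\end{itemize}
Therefore $d_{\vec m^*}-c_{\vec m^*}e_j\neq 0$ except with probability $1/p$. Also $f_j(-e_j)\neq0$ unless $e_j$ collides with another $e_i$. So the remainder $f_j(-e_j)(d_{\vec m^*}-c_{\vec m^*}e_j)$ of $f_j(X)(c_{\vec m^*}X+d_{\vec m^*})$ modulo $X+e_j$ is nonzero, and $A^*$ yields $g_1^{1/(x+e_j)}$. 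This argument also disposes of the collision case $B^*=B_j$ that your version leaves open.
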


The proof of this lemma can be found in \cite{DBLP:conf/eurocrypt/TessaroZ23a}.

\subsection{Schnorr-based Zero-knowledge Proofs for BBS Signatures}

We next recall the standard zero-knowledge proof of knowledge associated with BBS signatures.
The proof follows the structure of \cite{DBLP:journals/joc/Schnorr91} and its generalizations, cf. \cite{DBLP:phd/de/Bangerter2005}.
These interactive protocols are then made non-interactive using the Fiat-Shamir (FS) transform, cf. \cite{DBLP:conf/crypto/FiatS86}.

Such a proof allows a prover to demonstrate possession of a valid BBS signature on a message vector without revealing the signature or the signed messages.
Moreover, the proof can be combined with additional zero-knowledge arguments showing that the hidden messages satisfy a public policy.

Let $\vec{m}=(m_1,\ldots,m_L)\in\Z_p^L$ be a message vector, let $\sig=(A,e)$ be a valid BBS signature on $\vec{m}$ under verification key $\pk$, and let $\policy:\Z_p^L\rightarrow\{0,1\}$ be an efficiently decidable policy.

The corresponding proof relation is defined by
$$
  \mathcal{R}_{\mathsf{BBS},\policy}
  :=
  \left\{
    \left(
      (\pk,\policy),
      (\sig,\vec{m})
    \right)
    \;\middle|\;
    \Verify(\pk,\vec{m},\sig)=1
    \;\land\;
    \policy(\vec{m})=1
  \right\}.
$$

Recall also that a $\Sigma$-protocol is \emph{special honest-verifier zero knowledge} if, for every valid public statement and every prescribed challenge, an efficient simulator can generate an accepting transcript for that challenge without knowing a witness, with the same distribution as an honest execution conditioned on this challenge.
Standard Schnorr representation protocols satisfy this property.

The proof system consists of the following algorithms.

\begin{description}
\item[Parameter Generation.]
 This algorithm samples a hash function
  $$
    H:\{0,1\}^*\rightarrow\Z_p
  $$
  modeled as a random oracle, where $p$ is as in the BBS parameters.
  
\item[Prove.]
  This algorithm takes as input a public key $\pk$, a policy $\policy$, a valid BBS signature $\sig=(A,e)$ on $\vec{m}=(m_1,\ldots,m_L)$, and a presentation context $\ctx$.

  The prover first recomputes $B  \gets g_1\prod_{i=1}^{L}h_i^{m_i}$ and then samples $r_1,r_2\rexec\Z_p^*$.
  It computes:
$$
  \bar{A}
  \gets
  A^{r_1r_2},
  \qquad
  D
  \gets
  B^{r_2},
  \qquad
  \bar{B}
  \gets
  D^{r_1}\bar{A}^{-e},
  \qquad
  r_3
  \gets
  r_2^{-1}.
$$

The prover then computes the Fiat-Shamir proof

\begin{align*}
  \pi'
  \rexec
  \NIZK\bigg[&
    (r_1,r_3,e,m_1,\ldots,m_L):\\
    &\bar{B} = D^{r_1}\bar{A}^{-e} \ \land\ 
      g_1 = D^{r_3} \prod_{i=1}^{L}h_i^{-m_i}\ \land\ 
      \policy(m_1,\ldots,m_L)=1
  \bigg](\ctx).
\end{align*}

The first two relations are representation relations and can be proven using the standard Schnorr protocol.
The policy proof is composed using standard techniques, where we assume that the employed policy-proof techniques are special honest-verifier zero knowledge and special sound.

The resulting proof is $\pi  \gets (\bar{A},\bar{B},D,\pi')$.

\item[Verify.]
$\Verify_{\ZKP}(\pp_{\ZKP},\pk,\policy,\pi,\ctx)$ parses
$$
  \pi=(\bar{A},\bar{B},D,\pi').
$$
It first checks that the issuer public key is well formed, i.e.,
$$
    e(X_1,g_2)=e(g_1,X_2),
$$
and subsequently whether $\bar{A},\bar{B},D$ are valid non-identity group elements and that
$$
  \e(\bar{A},\pk)
  =
  \e(\bar{B},g_2).
$$

It then verifies $\pi'$ and outputs $1$ if both the pairing equation and the proof verify, and outputs $0$ otherwise.
\end{description}

\begin{definition}[Discrete-logarithm representation assumption]
Let $\G$ be a cyclic group of prime order $p$.
For $L\geq 1$, sample independent generators
$
    g_0,g_1,\ldots,g_L \rexec\G\setminus\{1\}.
$
The \emph{discrete-logarithm representation problem} is to compute a non-zero vector
$
    (\alpha_0,\alpha_1,\ldots,\alpha_L)
$
such that
$
    \prod_{i=0}^{L} g_i^{\alpha_i}=1.
$
We say that the \emph{discrete-logarithm representation assumption} holds in $\G$ if, for every \PPT adversary $\adv$, the probability of solving this problem is negligible in the security parameter.
\end{definition}

\begin{lemma}\label{lemma:bbs-special-soundness}
The Schnorr representation protocol used in the BBS presentation protocol is special sound.
Moreover, from an extracted witness
$(r_1,r_3,e,m_1,\ldots,m_L)$
satisfying
$$
    \bar B = D^{r_1}\bar A^{-e}
$$
and
$$
    g_1 = D^{r_3}\prod_{i=1}^L h_i^{-m_i},
$$
together with the verifier's pairing equation
$$
    e(\bar A,\ipk)=e(\bar B,g_2),
$$
one can efficiently reconstruct a valid BBS signature on
$\mathbf m=(m_1,\ldots,m_L)$,
except with negligible probability under the corresponding discrete-logarithm representation assumptions.
Furthermore, AND composition with other special-sound protocols preserves special soundness.
\end{lemma}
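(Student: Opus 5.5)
We prove the three claims of \cref{lemma:bbs-special-soundness} in order. These are special soundness of the representation protocol, reconstruction of a BBS signature from an extracted witness, and preservation under AND composition.

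\textbf{Special soundness.} The two relations
\[
\bar B = D^{r_1}\bar A^{-e}
\quad\text{and}\quad
g_1 = D^{r_3}\prod_{i=1}^L h_i^{-m_i}
\]
are linear in the exponent vector $(r_1,r_3,e,m_1,\ldots,m_L)$ with public bases. Hence the standard Schnorr argument applies. Take two accepting transcripts with the same commitments $(T_1,T_2)$, distinct challenges $c\neq c'$, and responses $s,s'$. Then the vector $(s-s')/(c-c')$ satisfies both equations componentwise. Each response component is used consistently in both equations wherever the same witness appears, so the extracted values are common to both equations.

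\textbf{Reconstruction.} This is the central step. From the extracted witness:
\begin{itemize}
\item The second equation gives $D^{r_3}=g_1\prod_i h_i^{m_i}=B$, where $B$ is the BBS message commitment for $\mathbf m$.
\item The verifier rejects $D=1$. Hence $r_3\neq 0$ whenever $B\neq 1$. The case $B=1$ means $g_1\prod h_i^{m_i}=1$, which is a nontrivial representation of $1$ and is excluded by the discrete-logarithm representation assumption. So $D=B^{1/r_3}$.
\item Set $A:=\bar A^{r_3/r_1}$, provided $r_1\neq0$, and $e$ as extracted.
\end{itemize}

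The pairing equation gives $\e(\bar A,X_2)=\e(\bar B,g_2)=\e(D^{r_1}\bar A^{-e},g_2)$. Rearranging yields $\e(\bar A,X_2 g_2^{e})=\e(D^{r_1},g_2)$. Raising both sides to $r_3/r_1$ gives
\[
\e(A,X_2g_2^e)=\e(D^{r_3},g_2)=\e(B,g_2),
\]
which is exactly BBS verification. It remains to check the side conditions:
\begin{itemize}
\item $A\neq 1$ follows from $\bar A\neq 1$.
\item Well-formedness of $\ipk$, namely $\e(X_1,g_2)=\e(g_1,X_2)$, is checked by the verifier.
\item If $r_1=0$, then $\bar B=\bar A^{-e}$. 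The pairing equation then forces $\e(\bar A,X_2g_2^e)=1$, so either $e=-x$ or $\bar A=1$. The latter is rejected. The former reveals the secret key, $x=-e$. In the reduction, where the signing key is unknown, this event would break $q$-SDH or discrete logarithm, so it occurs only with negligible probability.
\end{itemize}

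I expect this degenerate-case analysis to be the main obstacle. The concern is that the randomized elements are adversarially chosen, so each degeneracy must be tied explicitly to a hard problem. Here, $B=1$ is tied to the representation problem over the independent generators $g_1,h_1,\ldots,h_L$. The case $x+e=0$ is tied to recovering the secret key, which trivially solves $q$-SDH or discrete logarithm.

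\textbf{AND composition.} Several special-sound protocols are run with a single shared challenge, and shared witnesses use shared randomness and responses; this covers tag equality and the equality of edge endpoints with adjacent node identifiers. Two accepting transcripts with distinct challenges then yield, via each sub-protocol's extractor, witnesses for every conjunct. Because the responses for a shared witness are identical across conjuncts, the extracted values for that witness coincide. This is the usual argument: the extractor of the composition is the tuple of component extractors applied to the same transcript pair.
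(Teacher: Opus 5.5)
Your proposal is correct and follows essentially the same route as the paper. It extracts the witness by differencing Schnorr responses, uses $B=D^{r_3}$ and $A=\bar A^{r_3/r_1}$ with the rearranged pairing equation raised to $r_3/r_1$, rules out the degenerate cases $r_1=0$ (which forces $X_2=g_2^{-e}$) and $r_3=0$ (which forces $g_1\prod_i h_i^{m_i}=1$), and handles AND composition through a common challenge. Your explicit check of the remaining BBS side conditions ($A\neq 1$ and key well-formedness) is a small addition that the paper leaves implicit; the appeal to the verifier rejecting $D=1$ is unnecessary, since $r_3=0$ already gives $B=D^0=1$.
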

\begin{proof}
The proof follows a standard argument and is only made explicit for scientific rigor and self-containedness purposes.
Special soundness of the representation protocol follows from the standard Schnorr extraction argument.
Given two accepting transcripts with the same first message and distinct challenges, subtracting the corresponding response vectors and dividing by the challenge difference yields
$(r_1,r_3,e,m_1,\ldots,m_L)$
satisfying both representation equations.

Define
$$
    B := g_1\prod_{i=1}^L h_i^{m_i}.
$$
The second extracted relation gives
$$
    B=D^{r_3}.
$$
Assuming $r_1\neq0$, define
$$
    A:=\bar A^{r_3/r_1}.
$$
From the first extracted relation and the verifier's pairing equation, we obtain
$$
    e(\bar A,\ipk g_2^e)
    =
    e(D^{r_1},g_2).
$$
Raising both sides to $r_3/r_1$ yields
$$
    e(A,\ipk g_2^e)
    =
    e(D^{r_3},g_2)
    =
    e(B,g_2).
$$
Hence $(A,e)$ is a valid BBS signature on $\mathbf m$.

It remains to consider the degenerate cases.
If $r_1=0$, then the first relation together with the pairing check and $\bar A\neq1$ implies $X_2=g_2^{-e}$, which reveals the discrete logarithm of $X_2$ in $\G_2$.
If $r_3=0$, then the second relation gives
$$
    g_1\prod_{i=1}^L h_i^{m_i}=1,
$$
which yields a non-trivial representation relation among the independently sampled public generators in $\G_1$.
The first event therefore occurs only with negligible probability under the discrete-logarithm assumption in $\G_2$, while the second occurs only with negligible probability under the discrete-logarithm representation assumption in $\G_1$.

Finally, special soundness is preserved under AND composition:
two accepting transcripts with the same first message and distinct common challenges give two accepting transcripts with distinct challenges for every composed subprotocol, from which the corresponding witnesses can be extracted.\qed
\end{proof}

%\begin{Lemma}\label{lemma:enc-to-the-sky}
%Let $\ParGen$ additionally generate a public key $\mathsf{epk}$ of an IND-CPA-secure public-key encryption scheme.
%Assume that every witness is additionally encrypted under $\mathsf{epk}$, and that the proof relation additionally proves the respective consistency.
%If the resulting Fiat-Shamir proof is zero-knowledge and simulation sound in the programmable random-oracle model, then the augmented BBS proof of knowledge described above is zero-knowledge in the sense of \cref{def:zero-knowledge} and simulation-sound extractable in the sense of \cref{def:sse}.
%\end{Lemma}
%The lemma follows immediately from the standard Fiat-Shamir transformation and its weak simulation-extractability guarantees~\cite{DBLP:conf/indocrypt/FaustKMV12,DBLP:conf/crypto/FiatS86}, combined with the ``encryption to the sky'' paradigm as used, e.g., in~\cite{DBLP:conf/sacrypt/CamenischKLMNP15}, where the corresponding decryption key serves as the extraction trapdoor.

%This compiler is used to establish the stronger extraction property required by the generic security reduction.
%The practical instantiation evaluated in \cref{sec:evaluation} uses the standard BBS/Schnorr Fiat-Shamir proof system without this additional extraction layer.

\subsection{Efficient OR Composition of Zero-Knowledge Proofs}

We realize disjunctions using the standard OR composition of \cite{DBLP:conf/crypto/CramerDS94} to realize the required disjunction
$$
  \policy_\edge\left(     (\attr^{\edge_l}_j)_{j=1}^m   \right)=1   \ \lor\   \id^{\node_l}=\id^{\node_{l+1}}\,.
$$

For this, consider the general transformation: consider two $\Sigma$-protocols for relations $\mathcal R_0$ and $\mathcal R_1$, with challenges in $\Z_p$, where the prover knows a witness for at least one of the two statements.

To prove
$$
  x_0\in\mathcal L_{\mathcal R_0}  \ \lor\   x_1\in\mathcal L_{\mathcal R_1},
$$
the prover honestly executes the $\Sigma$-protocol for the branch for which it knows a witness and simulates an accepting transcript for the other branch.

We assume that the underlying $\Sigma$-protocols are special honest-verifier zero knowledge, so that an accepting transcript can be simulated for any prescribed branch challenge.
More precisely, assuming that the prover knows a witness for branch $b$, it chooses a random challenge $c_{1-b}\rexec\Z_p$ and response $z_{1-b}$ for the other branch and uses the corresponding simulator to compute a commitment $T_{1-b}$ yielding an accepting transcript $(T_{1-b},c_{1-b},z_{1-b})$.
For the real branch it computes an honest first message $T_b$.
The common Fiat-Shamir challenge is then computed as
$$
  c \gets H(x_0,x_1,T_0,T_1,\ctx),
$$
and the challenge for the real branch is defined as
$$
  c_b \gets c -  c_{1-b} \pmod p.
$$
The prover finally computes the honest response $z_b$ for challenge $c_b$. 
The resulting proof consists of
$$
  \pi_{\mathsf{OR}} =   (T_0,T_1,c_0,c_1,z_0,z_1),
$$
and the verifier checks
$$
  c_0 +  c_1=c\pmod p
$$
together with acceptance of both underlying $\Sigma$-protocol transcripts.

\medskip

For our security proof, we only require special soundness of the OR composition.
In particular, the final statement in the unforgeability experiment must be fresh with respect to all simulated presentation queries, so the additional quasi-unique-response condition used to handle malleability of proofs for previously simulated statements is not needed.
The standard OR composition preserves special soundness as follows.

\begin{lemma}\label{lemma:or-special-soundness}
Assume that the underlying $\Sigma$-protocols for relations $\mathcal{R}_0$ and $\mathcal{R}_1$ are special sound.
Then their standard OR composition is special sound.
\end{lemma}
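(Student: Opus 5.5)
The plan is the standard special-soundness extraction for Cramer--Damg\aa rd--Schoenmakers compositions. Fix a common first message $(T_0,T_1)$ of the OR protocol. Take two accepting transcripts
\[
(T_0,T_1,c,c_0,c_1,z_0,z_1)\quad\text{and}\quad(T_0,T_1,c',c_0',c_1',z_0',z_1'),
\]
where $c\neq c'$. Verification forces $c_0+c_1=c$ and $c_0'+c_1'=c'$ modulo $p$, and each branch transcript must be accepting for its own $\Sigma$-protocol.

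The key step is a pigeonhole argument on the branch challenges. If both $c_0=c_0'$ and $c_1=c_1'$ held, then $c=c_0+c_1=c_0'+c_1'=c'$, which contradicts $c\neq c'$. Hence there is a branch $b\in\{0,1\}$ with $c_b\neq c_b'$, and the extractor finds it simply by comparing the two transcripts. For that branch we then have two accepting transcripts $(T_b,c_b,z_b)$ and $(T_b,c_b',z_b')$ with the same first message $T_b$ and distinct challenges. Special soundness of the $\Sigma$-protocol for $\mathcal R_b$ gives an efficient extractor that outputs $w_b$ with $\mathcal R_b(x_b,w_b)=1$. This $w_b$ witnesses the disjunction $x_0\in\mathcal L_{\mathcal R_0}\lor x_1\in\mathcal L_{\mathcal R_1}$. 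The extractor runs in polynomial time because it only compares challenges and calls one underlying extractor.

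Two points need care, though neither should be a real obstacle. First, the challenge space: branch challenges live in $\Z_p$, and the sub-protocols must be special sound for challenges in $\Z_p$, which matches the paper's setup. The map $(c_0,c_1)\mapsto c_0+c_1$ is well defined there, so the argument above goes through verbatim. Second, embedding into the presentation proof. For edge position $l$, branch $0$ is the policy proof on $\mathbf b_l$ linked to the BBS edge proof, and branch $1$ is the equality proof $\id^{\node_l}=\id^{\node_{l+1}}$. The shared witnesses, namely the hidden identifiers and attributes, are bound through the AND-composed BBS representation proofs. The global AND composition then extracts these shared witnesses consistently by \cref{lemma:bbs-special-soundness}, and the OR extractor only has to report which branch holds. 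I would state this explicitly so that the lemma can be used inside the AND composition of \cref{lemma:bbs-special-soundness}.
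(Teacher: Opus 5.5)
Your proof is correct and is essentially the paper's own argument. From $c\neq c'$ with $c_0+c_1=c$ and $c_0'+c_1'=c'$, you obtain a branch $b$ with $c_b\neq c_b'$. You then apply special soundness of the $\Sigma$-protocol for $\mathcal R_b$ to $(T_b,c_b,z_b)$ and $(T_b,c_b',z_b')$, and note that the extracted $w_b$ witnesses the disjunction.

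Your closing remarks on linking shared witnesses inside the presentation proof are not needed for this lemma. The paper handles that composition step separately, and only briefly, in \cref{cor:complete-special-soundness}.
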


\begin{proof}
The proof is again a standard argument.
Consider two accepting OR transcripts with the same first message $(T_0,T_1)$ and distinct overall challenges $c\neq c'$.
Let
$$
    c=c_0+c_1 \pmod p
    \quad\text{and}\quad
    c'=c'_0+c'_1 \pmod p
$$
be the respective challenge decompositions.
Since $c\neq c'$, there exists $b\in\{0,1\}$ such that $c_b\neq c'_b$.
Hence,
$
    (T_b,c_b,z_b) \text{ and } (T_b,c'_b,z'_b)
$
are two accepting transcripts of the underlying $\Sigma$-protocol for branch $b$ with the same first message and distinct challenges.
By special soundness of that protocol, one can efficiently extract a witness $w_b$ for $\mathcal{R}_b$.
This witness is also a valid witness for the disjunctive relation $\mathcal{R}_0\lor\mathcal{R}_1$.\qed
\end{proof}

The mutual exclusivity of the two branches for valid certified edges is not required for special soundness of the OR protocol.
It instead ensures that, within the complete path relation, an authenticated edge is interpreted either as a genuine policy-compliant edge or as an issuer-added padding loop.

\begin{corollary}\label{cor:complete-special-soundness}
Assume that the Schnorr representation protocols and the proof techniques used for the supported policies are special sound.
Then the complete interactive presentation protocol is special sound.
\end{corollary}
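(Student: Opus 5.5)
The plan is to argue special soundness of the complete presentation protocol structurally, by viewing it as an AND composition of its constituent $\Sigma$-protocols under one common challenge, and then using \cref{lemma:bbs-special-soundness} and \cref{lemma:or-special-soundness} for the individual parts. Concretely, I would first write the complete interactive protocol as the conjunction of the following subprotocols. There are $\ell$ BBS possession proofs for the node positions, each carrying the node-policy proof $\policy_\node(\mathbf a_k)=1$. There are $\ell-1$ BBS possession proofs for the edge positions. For every edge position there is one OR proof for $\policy_\edge(\mathbf b_l)=1\lor\id^{\node_l}=\id^{\node_{l+1}}$. Finally, there are the linear equality constraints tying the subprotocols together: the common tag $\rnd$ in all positions, the equalities between edge endpoints and the adjacent node identifiers, and the public endpoint equalities $\id^{\node_1}=\id^{\node_S}$, $\id^{\node_\ell}=\id^{\node_D}$. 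I would note that these linking equalities are realized, as usual, by reusing the same response (and hence the same randomness in the commitment) for the shared hidden message across subprotocols, or equivalently by substituting public values for disclosed ones. Under this view, the linking introduces no new soundness obligation beyond extracting a consistent value.

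Next, I take two accepting transcripts with the same first message (all commitments, including the randomized $\bar A,\bar B,D$ of every position and the OR commitments $T_0,T_1$) and distinct common challenges $c\neq c'$. Every non-OR subprotocol receives $c$ and $c'$ directly. By the AND part of \cref{lemma:bbs-special-soundness}, each BBS representation protocol yields $(r_1,r_3,e,\vec m)$, and each yields a valid BBS signature on its $\vec m$, except in the degenerate cases handled there. The policy subproofs are special sound by assumption, so they yield witnesses for $\policy_\node$. For each OR proof, \cref{lemma:or-special-soundness} gives a witness for at least one branch, because $c_0+c_1\neq c_0'+c_1'$ forces one branch challenge to differ.

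The step I expect to be the main obstacle is showing that the extracted values are \emph{consistent} across subprotocols. Take, for instance, the identifier $\id^{\node_{l+1}}$ extracted from an edge proof, the same identifier inside the OR branch, and the one extracted from the node proof. These must coincide so that the assembled witness satisfies the complete path relation. I would handle this by observing that a shared secret has a single response $z=\rho+c\,w$ used in every equation that contains it. Extraction via $(z-z')/(c-c')$ therefore returns the same value in every subprotocol where that response appears. In the OR proof, the relevant branch's equality statement is expressed over the committed values, which are themselves tied to the common responses. The one subtlety is that the OR branch challenges $c_b,c_b'$ differ from $c,c'$, so the OR proof must be phrased over auxiliary commitments to the linked values, with openings bound outside the disjunction. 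I would make this explicit and argue that the extracted openings agree, using the binding of those commitments under the discrete-logarithm representation assumption.

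Finally, I assemble the extracted signatures, attributes, identifiers and tag into a witness for the full path relation and conclude special soundness, up to the negligible degenerate events of \cref{lemma:bbs-special-soundness}.
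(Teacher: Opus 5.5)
Your proposal is correct and follows the paper's route: the paper's proof invokes \cref{lemma:bbs-special-soundness} for the AND-composed BBS and policy parts and \cref{lemma:or-special-soundness} for each genuine-edge-or-padding disjunction, then concludes from the common challenge and ``consistently linked witness variables.'' Your explicit handling of consistency across the OR fills in what the paper leaves implicit: the branch challenges differ from the common challenge, so the disjunction must refer to auxiliary commitments opened outside it. The cost is that the linking holds only computationally under discrete-logarithm binding, which matches the computational caveats already present in \cref{lemma:bbs-special-soundness}.
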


\begin{proof}
By \cref{lemma:bbs-special-soundness}, the BBS representation proofs and their AND compositions are special sound.
By \cref{lemma:or-special-soundness}, each genuine-edge-or-padding subprotocol is special sound as well.
Since the complete presentation protocol combines these subprotocols using a common challenge and consistently linked witness variables, special soundness of the complete protocol follows.\qed
\end{proof}

\begin{corollary}\label{cor:fs-wse}
Assume that the complete interactive presentation protocol is special sound and special honest-verifier zero knowledge, and that its challenge space is $\Z_p$.
Then its Fiat-Shamir transform is zero knowledge and fresh-statement weakly simulation extractable in the random-oracle model.
\end{corollary}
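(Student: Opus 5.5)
To prove \cref{cor:fs-wse}, I would treat the two properties separately. Both rely on standard Fiat--Shamir arguments in the programmable random-oracle model, applied to the complete interactive presentation protocol. That protocol is special sound by \cref{cor:complete-special-soundness} and special honest-verifier zero knowledge by assumption.

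\emph{Zero knowledge.} I would let $\Sim_1$ output the real parameters together with a trapdoor granting control of the random oracle $H$. On statement $x$ (the full public request, including $\ctx$), $\Sim_2$ proceeds as follows:
\begin{enumerate}
\item it samples a challenge $c\rexec\Z_p$;
\item it runs the SHVZK simulator of the composed protocol on $(x,c)$ to obtain a first message $T$ and response $z$;
\item it programs $H(x,T,\ctx):=c$.
\end{enumerate}
For the composed protocol this means sampling the common challenge, splitting it inside each OR gadget as in the OR construction, and simulating both branches. Programming fails only if the query $(x,T,\ctx)$ was already defined. Since $T$ contains fresh randomized elements $\bar A,\bar B,D$ and commitments that have high min-entropy, this happens with probability at most $q_H/p$ per proof. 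A hybrid argument over the simulated proofs then shows indistinguishability, using SHVZK at each step.

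\emph{Fresh-statement weak simulation extractability.} I would build the extractor from the general forking lemma. It runs $\adv$ with the simulator above, answering $\Sim_2$ queries by programming $H$ and recording the transcript $T$. Let $(x^*,\pi^*)$ be the output, with first message $T^*$. Because $x^*\notin Q$ and every programmed point includes its statement $x$, the point $(x^*,T^*,\ctx^*)$ was never programmed by the simulator. Hence its answer $c^*$ is a genuine random-oracle answer, apart from the negligible event that $\adv$ never queried it, in which case acceptance has probability $1/p$. The extractor then rewinds $\adv$ to that query, keeps the random tape $\rho$ and all earlier oracle and simulator answers fixed, and resamples the answers from this point onward. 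The forking lemma yields two accepting transcripts $(T^*,c,z)$ and $(T^*,c',z')$ with $c\neq c'$ with probability at least roughly $\mathsf{acc}\,(\mathsf{acc}/q_H-1/p)$. This gives the required bound with $d=2$ and $p(\secpar)=q_H$. Special soundness, via \cref{cor:complete-special-soundness}, then produces a witness. \cref{lemma:bbs-special-soundness} turns it into actual BBS signatures and \cref{lemma:or-special-soundness} gives the branch witnesses, so $\mathcal R(x^*,w^*)=1$ holds except for the negligible degenerate events listed there. I would fold those events into $\nu$.

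\emph{Main obstacle.} The delicate step is the consistency of rewinding with simulated proofs. After the fork, $\adv$ may issue new simulation queries, and these must be answered freshly by programming $H$. I must then argue that the forked run's $T^*$ never collides with a programmed point. This follows again because every programmed input contains a statement in $Q$, which differs from $x^*$. Any residual collision occurs with probability $O(q_H q_S/p)$, which I also absorb into $\nu$.
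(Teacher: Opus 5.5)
Your proposal is correct and follows the paper's proof. Zero knowledge comes from programming the random oracle on SHVZK-simulated transcripts. Extraction uses that freshness of $x^*$ leaves the Fiat--Shamir point $(x^*,T^*)$ unprogrammed, forks there, and applies special soundness together with \cref{lemma:bbs-special-soundness}; your explicit forking bound ($d=2$, $p(\secpar)=q_H$) and your handling of post-fork simulation queries add detail the paper leaves implicit. The one step the paper spells out that you fold into the assumed SHVZK simulator is how to produce randomized elements passing the verifier's pairing check $\e(\bar A,X_2)=\e(\bar B,g_2)$ without any signature: it sets $\bar A=g_1^\rho$, $\bar B=X_1^\rho$, $D=g_1^\delta$, which are distributed exactly as honest ones, and this is why the issuer key also publishes $X_1$.
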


\begin{proof}
Let the issuer public key be $\ipk=(X_1,X_2)$.
The simulator samples $\rho,\delta\rexec\Z_p^*$ and sets
$$
    \bar A\gets g_1^\rho,
    \qquad
    \bar B\gets X_1^\rho,
    \qquad
    D\gets g_1^\delta.
$$
Hence, $\e(\bar A,X_2)=\e(\bar B,g_2).$
Moreover, $\bar A$ and $D$ are independently and uniformly distributed over
$\G_1\setminus\{1\}$, as in an honest presentation, while
$\bar B=\bar A^x$.
The remaining interactive transcript can then be generated using the usual special honest-verifier zero-knowledge simulator.
Programming the random oracle at the corresponding Fiat-Shamir query yields a simulated non-interactive proof with the same distribution as an honest proof.

For extraction, consider an adversary that has adaptive access to simulated proofs and outputs an accepting proof $\pi^*$ for a fresh statement $x^*$, i.e., one that was not previously submitted to the proof simulator.
Let $T^*$ be the first message of the corresponding transcript.
Since $x^*$ is fresh and the Fiat-Shamir input contains a domain-separated encoding of the complete public statement and all first messages, none of the simulated proofs has programmed the random oracle at the point corresponding to $(x^*,T^*)$.

Except with probability $1/p$, an adversary producing an accepting proof must have queried the random oracle at this point.
The extractor rewinds the adversary to this query, preserving its random tape and all preceding oracle and simulation transcripts, and answers the query with an independently sampled challenge $c'\neq c$.
By the standard forking argument, whenever the original accepting proof is produced with non-negligible probability, this yields with non-negligible probability a second accepting transcript with the same first message $T^*$ and a different challenge.

Special soundness of the complete interactive representation protocol then yields the corresponding representation witnesses.
Applying \cref{lemma:bbs-special-soundness} to every BBS component reconstructs valid BBS signatures except with negligible probability under the assumptions of that lemma, and therefore yields a witness $w^*$ satisfying $\mathcal{R}(x^*,w^*)=1$.
The resulting extraction probability satisfies the polynomial relation required by fresh-statement weak simulation extractability.
The argument is the fresh-statement specialization of the weak simulation-extractability extraction technique of Faust et al.~\cite{DBLP:conf/indocrypt/FaustKMV12}.\qed
\end{proof}
\section{Additional Extension Details}\label{sec:extension-details}
\subsection{Extension to Multi-Path Support}
Multi-path communication can further reduce the trust placed in individual repeater nodes.
On a single path, compromising any one trusted repeater is sufficient to compromise the transmitted key material.
A natural countermeasure is therefore to distribute the secret across multiple paths, for instance using secret sharing as considered for multi-path QKD, cf.~\cite{acns2026}.
With a $t$-out-of-$t$ sharing, an adversary must compromise at least one node on every path to recover the secret, while a $k$-out-of-$t$ sharing requires compromising nodes on at least $k$ different paths.
For this argument it is essential that the paths are disjoint except for their common source and destination.
More specifically, it is required that they are \emph{node-disjoint}, except for their common source and destination, rather than merely link-disjoint.
For example, in a star-shaped network, several paths may use different links while all traversing the same central repeater.
Such a repeater would then constitute a single point of failure and could obtain all shares despite link-disjointness.
Consequently, the relevant security requirement is that no intermediate node occurs on more than one of the selected paths.

Our construction can be extended to provide this guarantee without revealing the identities of the traversed nodes, following the same approach based on scope-exclusive pseudonyms as used in~\cite{acns2026}, cf. also \cite{DBLP:conf/eurocrypt/NaorPR99} and \cite{DBLP:books/daglib/p/BichselCDEKLNP15}.
For this purpose, each node $\node$ is additionally certified with a high-entropy secret value $x_\node$ as a hidden attribute. The issuer assigns one stable secret to each node identifier throughout the same graph epoch because signing a different secret for an already certified node would break the disjointness test.
For a common context $\ctx$ (which in the following is assumed to be unique per presentation), every intermediate node $\node_k$ appearing in a presented path derives the session-specific pseudonym
$$
    \mathsf{nym}_{\node_k,\ctx}=H(\ctx)^{x_{\node_k}}\,,
$$
where $H$ hashes into an appropriate group.
The presentation proof is extended to show in zero knowledge that $\mathsf{nym}_{\node_k,\ctx}$ was correctly derived from the same secret $x_{\node_k}$ contained in the certified attributes of $\node_k$.
Since equal secrets yield equal pseudonyms for the same $\ctx$, while pseudonyms derived for different session identifiers remain unlinkable, the verifier can compare the pseudonym sets obtained for all paths associated with the same $\ctx$ and reject whenever two paths share an intermediate node.
Thus, pairwise disjointness of these pseudonym sets establishes node-disjointness without revealing the identities of the traversed nodes.

The direct construction above applies only to unpadded paths, but incurs only a small additional cost over the underlying presentation proof.
Supporting padded paths requires distinguishing genuine node occurrences from repetitions introduced solely for padding.
This can be achieved by applying a similar technique as for the edge-policy check in \cref{s:generic_construction}.
For every position $k>1$, the prover proves that either the current node is a repetition of the previous node, or that it is a genuine path node satisfying the node policy and that its pseudonym was correctly derived from the certified secret attribute.
Concretely, the corresponding part of the proof relation can be expressed as
\begin{align*}
    \id^{\node_k}=\id^{\node_{k-1}}
    \ \lor\
    \bigg(
        &\Verify_{\MMS_\node}\left(\ipk_\node,\left(\id_\GG,t,\id^{\node_k},\left(\attr^{\node_k}_i\right)_{i=1}^n\right),\sig_{\node_k}\right)=1
        \land\\
        &\policy_\node\left(\left(\attr^{\node_k}_i\right)_{i=1}^n\right)=1
        \land \mathsf{nym}_{k,\ctx}=H(\ctx)^{x_k}
    \bigg).
\end{align*}
Here, $x_k$ denotes the high-entropy secret value included among the certified attributes of $\node_k$.
If the first branch holds, the occurrence corresponds to padding and the prover may output an independently sampled random group element in place of a pseudonym.
Consequently, repeated padding occurrences of the same node do not create equal pseudonyms and therefore do not falsely violate the node-disjointness check.

This extension comes at a substantially higher proof cost than the direct construction for unpadded paths.
In particular, the node-signature verification, the node-policy predicates, and the pseudonym-correctness relation all appear inside one branch of an OR proof for every path position.
With a straightforward OR composition, the prover must therefore additionally simulate the alternative branch and provide responses for both branches, such that the cost of the node-related part of the proof can approach twice that of the unpadded construction.
The exact overhead depends on the concrete proof system and on whether common subrelations can be factored out or otherwise shared between the two branches.

\subsection{Adding nodes and edges}
  Networks may evolve over time as nodes or edges are added, removed, or replaced.
  If certified nodes or edges are removed, the graph has to be certified again, similarly to existing graph-signature approaches~\cite{DBLP:conf/ccs/Gross14}.
  However, unlike these approaches, our construction naturally supports monotonic graph growth without requiring a new certification of the entire graph.
  New nodes or edges can simply be added by issuing the corresponding signatures under the existing graph identifier and binding tag, using the appropriate node or edge identifiers and attributes.  

\subsection{Proving Non-Connectivity}
  Although not required for our main application, the construction can also be extended to prove that two nodes are not connected, thereby achieving another important feature presented by \cite{DBLP:conf/ccs/Gross14}.
  For this purpose, the issuer additionally assigns each node an identifier of its connected component in the underlying undirected network and includes this identifier among the certified node attributes.
  
  To prove that two nodes are disconnected, the prover shows possession of valid signatures on both nodes and proves in zero knowledge that their certified component identifiers are different.
  This only requires an inequality proof over two certified attributes and therefore adds constant-size overhead, independently of the size of the graph.

  We note that this extension is inherently not fully compatible with the efficient monotonic graph growth discussed above.
  Adding nodes or edges within an existing connected component leaves the identifiers unchanged.
  However, adding an edge between two previously disconnected components merges them, such that the affected nodes have to be re-certified with a common component identifier.

\section{On the Use of Generative AI}
Generative AI tools, in particular ChatGPT models from the GPT-5 and GPT-6 families, were used during the preparation and revision of this manuscript for language editing, consistency checks, and as an interactive aid for discussing and stress-testing selected technical arguments.
This included identifying possible weaknesses in proof formulations and suggesting alternative ways of structuring or presenting individual arguments.
Any such suggestions were subsequently assessed, rederived where necessary, and independently verified by the authors before being incorporated into the manuscript.
The tools were not treated as authoritative sources, and references or technical claims suggested by them were not used without independent verification.
The research problem, overall construction, and final technical results remain the responsibility of the authors, who take full responsibility for the content of the paper.

\end{document}